\documentclass[envcountsame]{llncs}
\usepackage[colorlinks]{hyperref}

\newcommand{\BeginItemize}{\begin{itemize}}
\newcommand{\EndItemize}{\end{itemize}}

\sloppy


\newcommand{\isdef}{\stackrel{\mbox{\tiny def}}{=}}


\newcommand{\true}{\texttt{true}}
\newcommand{\false}{\texttt{false}}

\newcommand{\card}{\mathrm{card}}



\newcommand{\A}{{\cal A}}

\newcommand{\T}{{\cal T}}

\newcommand{\calP}{{\cal P}}






\usepackage{version}
\usepackage{wrapfig}
\includeversion{report}
\excludeversion{conference}

\usepackage{slashbox}
\usepackage{fancybox}
\usepackage{amssymb}
\usepackage{amsfonts}
\usepackage{amsmath}
\usepackage{scalefnt}
\usepackage{url}
\usepackage{verbatim}
\usepackage{color}
\usepackage{subfig}

\usepackage{tikz-qtree}%

\usepackage{xspace}
\usepackage[latin1]{inputenc}

\usepackage[ruled,vlined,linesnumbered]{algorithm2e}

\newcommand{\optionalproof}[1]{\begin{report}\begin{proof}#1\end{proof}\end{report}}

\newcommand{\set}[1]{\left\{#1\right\}}

\newcommand{\setof}[2]{\left\{#1\,\middle|\:#2\right\}}

\newcommand{\dN}{\mathbb{N}}

\renewcommand{\card}[1]{\mathrm{card}(#1)}

\newcommand{\compl}[1]{\overline{#1}}

\newcommand{\Lit}{\mathcal{L}}
\newcommand{\submin}{\mathrm{SubMin}}
\newcommand{\Aimplg}[2]{\mathfrak{I}_{#1}(#2)}
\newcommand{\Aimpl}[1]{\Aimplg{\A}{#1}}
\newcommand{\PAimplg}[2]{\mathfrak{P}_{#1}(#2)}
\newcommand{\PAimpl}[1]{\PAimplg{\A}{#1}}
\newcommand{\partpi}[3]{\mathfrak{P}_{#1,#2}(#3)}
\newcommand{\partimp}[3]{\mathfrak{I}_{#1,#2}(#3)}
\newcommand{\basalgo}{\textsc{bp}}

\newcommand{\ucons}[2]{\mathrm{U}_{#1}(#2)}
\newcommand{\uprop}[3]{\mathrm{G}_{#1,#2,M}(#3)}
\newcommand{\impid}{\textsc{imp}}

\newcommand{\gpid}{\textsc{GPiD}\xspace}
\newcommand{\csp}{\textsc{cSP}\xspace}
\newcommand{\minisat}{\textsc{MiniSAT}\xspace}
\newcommand{\cvcfsmt}{\textsc{CVC4}\xspace}
\newcommand{\ztsmt}{\textsc{Z3}\xspace}

\newcommand{\gsub}{\trianglelefteq_{\theory}}
\newcommand{\eqsub}{\sim}
\newcommand{\subsetlit}[2]{#1[#2]}
\newcommand{\intsubsetlit}[3]{#1^#3[#2]}

\newcommand{\modelst}{\models_{\theory}}
\newcommand{\equivt}{\equiv_{\theory}}

\newcommand{\sord}{\prec}
\newcommand{\sordeq}{\preceq}
\newcommand{\UNSAT}[1]{$#1$ is \tunsat}
\newcommand{\timplicate}{$(\theory,\A)$-implicate\xspace}

\begin{document}

\title{A Generic Framework for Implicate Generation Modulo Theories}
\author{Mnacho Echenim \and Nicolas Peltier \and Yanis Sellami}
\institute{Univ. Grenoble Alpes, CNRS, LIG, F-38000 Grenoble France \email{[Mnacho.Echenim|Nicolas.Peltier|Yanis.Sellami]@univ-grenoble-alpes.fr}}

\newcommand{\Funs}{\Sigma}
\newcommand{\theory}{{\cal T}}

\newcommand{\formA}{\phi}

\newcommand{\tsat}{$\theory$-satisfiable\xspace}
\newcommand{\tunsat}{$\theory$-unsatisfiable\xspace}

\maketitle

\begin{abstract}
The clausal logical consequences of a formula are called its implicates. The generation of these implicates has several applications, such as the identification of missing hypotheses in a logical specification. We present a procedure that generates the implicates of a quantifier-free formula modulo a theory. No assumption is made on the considered theory, other than the existence of a decision procedure. The algorithm has been implemented (using the solvers \minisat, \cvcfsmt and \ztsmt) and experimental results show evidence of the practical relevance of the proposed approach.
\end{abstract}

\section{Introduction}

We present a novel approach based on the usage of a generic SMT solver as a black box to generate ground implicates of a formula
modulo a theory.
 Formally, the implicates of a formula $\phi$ modulo a theory $\T$ are the ground clauses $C$ such that every model of $\T$ that satisfies $\phi$ also satisfies $C$; in other words, these are the \emph{clausal $\T$-consequences of $\phi$}.
The problem of generating such implicates (up to logical entailment) is of great practical relevance, since for any implicate $\bigvee_{i=1}^n l_i$, the formula $\bigwedge_{i=1}^n \neg l_i \wedge \phi$ 
is \tunsat. The set $\{ \neg l_i \mid i \in [1,n] \}$ can thus be viewed as a set of hypotheses under which $\phi$ is \tunsat or, dually, $\neg \phi$ is provable.
This means that generating implicates can permit to identify missing hypothesis in a theorem, such as omitted lemmata or side conditions.
Such hypotheses are useful to correct mistakes in specifications, but also to quickly spot why a given statement is not provable.
They can be far more informative than counter-examples in this respect, since the latter are hard to analyze and can be 
clouded with superfluous information.

Consider for example the simple program over an array defined in Algorithm \ref{alg:example}.
\begin{algorithm}
    \caption{\texttt{Example}(Array[Int] $T$, Int $a$, Int $b$)}
    \label{alg:example}

    \SetKw{Requires}{requires}
    \SetKw{Let}{let}
    \SetKw{Ensures}{ensures}
    \Requires{$\forall x,y \in [a,b], x \leq y \implies T[x] \leq T[y]$}\;
    \Requires{$T[a] \geq 0$}\;
    \Let $T[b+1] = T[b-1] + T[b]$ \;
    \Ensures{$\forall x,y \in [a,b+1], x \leq y \implies T[x] \leq T[y]$}\;
\end{algorithm}
It turns out that the postcondition of the program is not verified.
This can be evidenced by translating the preconditions, the algorithm and the negation of the post-condition into a conjunction of logical formulas, and  using an SMT solver
to construct a model for this conjunction; this model can then be analyzed to determine what precondition is missing.
The obtained model, however, will generally contain a hard to read array definition, and the missing precondition will not be explicitly returned.
For instance, the model returned by the \ztsmt SMT solver  \cite{DBLP:conf/tacas/MouraB08}  
is (using our notations):
\begin{quote} 
	$a:533,\ b:533,\\
	 f:x \mapsto x \geq 533\ ?\ (x \geq 534\ ?\ 534 : 533) : 532,\\ 
	 g : x \mapsto x = 533\ ?\ 535 : (x = 534\ ?\ 19 : - 516),\\ 
	 T : x \mapsto  g(f(x))$.
\end{quote}
Implicate generation on the other hand permits to identify the missing precondition in a more efficient manner. The first step consists in selecting the literals that can be used to generate potential explanations; these are called \emph{abducible literals}. In this example, the natural literals to consider
are all the (negations of) equalities and inequalities constructed using constants $a$ and $b$, along with additional predefined constants such as $0$ and $1$. The second step simply consists in invoking our system, {\gpid},  to generate the potential missing preconditions. For this example, {\gpid} plugged with {\ztsmt} generates the missing precondition $a\neq b$ in less than  $0.2$ seconds. If abducible literals can be constructed using also the function symbol $T$, then our tool generates the other potential precondition $T[b-1] \geq 0$ in the same amount of time.

In previous work \cite{EP15,EPT13,EPT14,EPT17}, we devised refinements of the superposition calculus specially tuned to derive such implicates for 
quantifier-free formula modulo equality with uninterpreted function symbols. We proved the soundness and deductive-completeness of the obtained procedures, i.e., we showed that the procedure derives
all implicates up to redundancy.
In the present work, we investigate a different approach.
We provide a generic algorithm for generating such implicates, relying only on the existence of a decision 
procedure for the underlying theory, possibly augmented with counter-example generation capabilities to further restrict the search space. The main advantage of this approach is that 
it is possible to 
use efficient SMT solvers as black boxes, instead of having to develop specific systems for the purpose of implicate generation.
Our method is based on decomposition, in the spirit of the DPLL approach.
The generated implicates are constructed  on a given set of candidate literals, called {\em abducible} literals, which is assumed to be fixed before the beginning of the search, e.g., by a human user.
As far as flexibility is concerned, the algorithm also permits to only generate implicates satisfying so-called \emph{$\subseteq$-closed} predicates without any post-processing step. 
We show that the algorithm is sound and complete, and we provide experimental results showing that the obtained system is much more efficient than 
the previous one based on superposition.
We also devise generic approaches to store sets of implicates efficiently, while removing implicates that are redundant modulo the considered theory. 
Again, the proposed procedure relies only on the possibility of deciding validity in the underlying theory. 

\paragraph{Related work.} The implicate generation problem  has been thoroughfully investigated in the context of propositional logic (see for instance \cite{marquis2000consequence}).
Earlier approaches are based mainly  on refinements of the Resolution rule  \cite{jackson1992computing,kean1990incremental,quine1955way,tison1967generalization}, and they
focus on the definition of efficient strategies to generate saturated clause sets and of compact data structures for storing the generated sets of implicates \cite{deKleer1992improved,fredkin1960trie,Mishchenko01anintroduction,simon2001efficient}.
Other approaches use decomposition-based methods, in the style of the DPLL procedure, for generating trie-based representations of sets of prime implicates \cite{matusiewicz2009prime,matusiewicz2011tri}.
Recently \cite{previti2015prime}, a new approach that outperforms previous algorithms has been proposed, based on max-satisfiability solving and problem reformulation.
Our algorithm can be used for propositional implicate generation but it is not competitive with this new approach.  Our aim with this work was rather to extend the scope of implicate generation to more expressive logics.
Indeed, there have been only very few approaches dealing with logics other than propositional.
Some extensions have been considered in modal logics \cite{bienvenu2007prime,blackburn2007handbook}, and  algorithms have been proposed for  first-order formulas, based on first-order resolution \cite{knill1993equality,marquis1991extending} or tableaux \cite{mayer1993first,nabeshima2010solar}. However, none of these approaches is capable of handling equality efficiently.
More recently, algorithms were  devised to generate sets of implicants of formulas interpreted in decidable theories \cite{dillig2012minimum}, by combining quantifier-elimination for discarding useless variables, with model building to construct sufficient conditions for satisfiability.

The rest of the paper is structured as follows.
In Section \ref{sect:prel}, basic definitions and notations are introduced.
Section \ref{sect:gen} contains the definition of the algorithm 
for generating implicates, starting with a straightforward, naive algorithm
and refining it to make it more efficient.
In Section \ref{sect:store} data-structures and algorithms are presented to store implicates efficiently modulo redundancy.
Section \ref{sect:exp} contains the description of the implementation and experimental results, and Section \ref{sect:conc}
concludes the paper.
\begin{conference}
Due to space restrictions, some of the proofs are omitted. The full version is available on arXiv.
\end{conference}

\section{Preliminary notions}

\label{sect:prel}

Ground terms and non-quantified formulas are built inductively as usual on a sorted signature $\Funs$. 
The notions of validity, models, satisfiability, etc. are defined as usual.
The set of literals built on $\Funs$ is denoted by $\Lit$. 
Let $\theory$ be a theory. A set of formulas $S$ is {\em \tsat} if there exists an interpretation $I$ such that $I \models S$ and $I \models \theory$.
We assume that the $\theory$-satisfiability problem is decidable, i.e., that there exists an 
SMT solver that, given a formula $\formA$ with no quantifier, can decide whether $\formA$ is \tsat.

We consider clauses as unordered
disjunctions of literals with no repetition. 
Thus, when we write $C\vee D$, we implicitly
assume that $C$ and $D$ share no literal. We also identify unit clauses with the literal they contain. For every literal $l$, $\compl{l}$ denotes the literal complementary of $l$. The empty clause is denoted by $\false$. If $Q = \set{l_1, \ldots, l_n}$ is a set of literals, then we denote by $\compl{Q}$ the clause $\compl{l_1} \vee \cdots \vee \compl{l_n}$. Conversely, given a clause $C = l_1\vee \cdots \vee l_n$, we denote by $\compl{C}$ the set of literals (or unit clauses) $\set{\compl{l_1},\ldots, \compl{l_n}}$.

We consider a finite set of \emph{abducible literals} $\A$. We assume that each of these literals is \tsat. Given a set of clauses $S$, we call a clause $C$ a \emph{\timplicate of $S$} if $\compl{C} \subseteq \A$
and $S\modelst C$. We say that $C$ is a \emph{prime \timplicate of $S$} if $C$ is a \timplicate of $S$ and for every \timplicate $D$ of $S$, if $D\modelst C$ then $C\modelst D$. The set of {\timplicate}s of $S$ is denoted by $\Aimpl{S}$, and the set of prime {\timplicate}s of $S$ is denoted by $\PAimpl{S}$.

\newcommand{\gsrep}[2]{\Theta_{#1}(#2)}
Given a set of clauses $S$ and a clause $C$, we write $S\gsub C$ if there is a clause $D\in S$ such that $D\modelst C$.
If $S'$ is  a set of clauses, then we write $S\gsub S'$ if for all $C \in S'$, we have $S\gsub C$.
We write $S \eqsub S'$ if $S \gsub S'$ and $S' \gsub S$ (i.e., $S$ and $S'$ are identical modulo $\theory$-equivalence).

\begin{proposition}
\label{prop:model}
Let $l$ be a literal and let $C,D$ be clauses.
The following statements hold:
\begin{enumerate}
\item{$l \vee C \modelst D$ iff $l \modelst D$ and $C \modelst D$.\label{model:decomp}}
\item{$C \modelst l \vee D$ iff $C \wedge  \compl{l} \modelst D$. \label{model:shift}}
\end{enumerate}
\end{proposition}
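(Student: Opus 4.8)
The plan is to prove each equivalence by unfolding the definition of $\modelst$ as semantic entailment over all models of $\theory$, reducing everything to the standard fact that a disjunction is satisfied by an interpretation iff one of its disjuncts is, and a conjunction iff all conjuncts are.

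For statement~\ref{model:decomp}, I would argue both directions. Recall that $l \vee C \modelst D$ means that every interpretation $I$ with $I \models \theory$ that satisfies $l \vee C$ also satisfies $D$. For the forward direction, assume $l \vee C \modelst D$; since any model of $l$ is a model of $l \vee C$, it is a model of $D$, giving $l \modelst D$, and symmetrically any model of $C$ is a model of $l \vee C$, giving $C \modelst D$. For the converse, assume $l \modelst D$ and $C \modelst D$, and let $I$ be a $\theory$-model of $l \vee C$. Then $I$ satisfies $l$ or $I$ satisfies $C$; in the first case $I \models D$ by $l \modelst D$, in the second by $C \modelst D$. Hence $I \models D$, so $l \vee C \modelst D$.

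For statement~\ref{model:shift}, the key observation is that for any interpretation $I$, we have $I \models l \vee D$ iff $I \not\models \compl{l}$ or $I \models D$ (using that $I \models l$ iff $I \not\models \compl{l}$, since $\compl{l}$ is the complementary literal). I would show $C \modelst l \vee D$ fails exactly when $C \wedge \compl{l} \modelst D$ fails, by contraposition. Suppose $C \wedge \compl{l} \not\modelst D$: there is a $\theory$-model $I$ of $C \wedge \compl{l}$ with $I \not\models D$; then $I \models C$ but $I$ satisfies neither $l$ (as $I \models \compl{l}$) nor $D$, so $I \not\models l \vee D$, witnessing $C \not\modelst l \vee D$. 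Conversely, if $C \not\modelst l \vee D$, a $\theory$-model $I$ of $C$ with $I \not\models l \vee D$ satisfies $I \not\models l$ (hence $I \models \compl{l}$) and $I \not\models D$, so $I$ is a $\theory$-model of $C \wedge \compl{l}$ refuting $D$.

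There is no real obstacle here: both statements are immediate consequences of the semantics of the connectives once entailment is relativized to $\theory$-models. The only point requiring a modicum of care is statement~\ref{model:shift}, where one must use that $l$ and $\compl{l}$ partition the interpretations (every interpretation satisfies exactly one of them), which is precisely what licenses replacing ``$I \not\models l$'' by ``$I \models \compl{l}$''. I expect the write-up to be short, proceeding by the contrapositive for \ref{model:shift} to avoid clumsy case analysis.
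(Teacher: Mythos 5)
Your proof is correct. The paper states this proposition without proof, treating both equivalences as immediate from the semantics of $\modelst$, and your argument is exactly the routine verification being elided: unfolding entailment over $\theory$-models, using that a model satisfies a clause iff it satisfies some disjunct, and that every interpretation satisfies exactly one of $l$ and $\compl{l}$.
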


\newcommand{\tautology}{$\theory$-tautology\xspace}
\newcommand{\tautologies}{$\theory$-tautologies\xspace}

We assume an order $\sord$ is given on clauses built on $\A$ that agrees with inclusion, i.e., such that $C \subsetneq D \Rightarrow C \sord D$.

\begin{definition}
A {\em \tautology} is a clause that is satisfied by every model of $\theory$.
Given a set of clauses $S$, we denote by $\submin(S)$ the set obtained by deleting from $S$ all clauses $D$ such that either $D$ is a \tautology, or there exists $C \in S$ such that $C \modelst D$ and ($D \not \modelst C$ or $C \sord D$).
\end{definition}

Note that in particular, we have $\PAimpl{S} \eqsub \submin(\Aimpl{S})$.

\newcommand{\fixed}{$\theory$-fixed\xspace}
\newcommand{\free}{$\theory$-free\xspace}
\newcommand{\depends}{$\theory$-depends\xspace}

\newcommand{\intof}[2]{#1[#2]}
\newcommand{\notof}[1]{\neg}

\section{On the generation of prime {\timplicate}s}

\label{sect:gen}

\subsection{A basic algorithm}

We present a simple and intuitive algorithm that permits to generate the {\timplicate}s of a set of formulas $S$. This algorithm is based on the fact that a clause $C$ is a \timplicate of $S$ if and only if $\compl{C} \subseteq \A$ and $S\cup \compl{C} \modelst \false$. It will thus basically consist in enumerating the subsets of $\A$ and searching for those whose union with $S$ is \tunsat. This may be done by starting with an empty set of hypotheses $M$ and repeatedly and nondeterministically adding new abductible literals to $M$ until $S \cup M$ is \tunsat.
This algorithm is naive, as the same clauses will be produced multiple times, but it forms the basis of the more efficient algorithm in Section \ref{sect:improve}.

\begin{definition}\label{def:partimp}
Let $S$ be a set of formulas.
Let $M,A$ be sets of literals such that $M\cup A \subseteq \A$. We define
	\begin{eqnarray*}
	\partimp{M}{A}{S}& =& \setof{C \in \Aimpl{S}}{\exists Q \subseteq A,\, C = \compl{M} \vee \compl{Q}
},\\
	\partpi{M}{A}{S} & = & \submin(\partimp{M}{A}{S}).
	\end{eqnarray*}
\end{definition}
Intuitively, a clause $\compl{M}\vee \compl{Q}$ thus belongs to $\partimp{M}{A}{S}$ if and only if $\compl{Q}$ is a \timplicate of $S\cup M$.

\begin{proposition}
\label{prop:unsatM}
Let $S$ be a set of formulas, $M$ and $A$ be sets of literals such that $M \cup A \subseteq \A$. If $M$ is \tsat, then
$S \cup M$ is \tunsat iff $\partpi{M}{A}{S} = \{ \compl{M} \}$.
If $M$ is \tunsat, then $\partpi{M}{A}{S} = \emptyset$.
\end{proposition}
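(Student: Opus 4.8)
The plan is to reduce everything to the behaviour of the single clause $\compl{M}$, using the key structural fact that, by Definition~\ref{def:partimp}, every member of $\partimp{M}{A}{S}$ has $\compl{M}$ as a subclause. First I would record the translation obtained by iterating Proposition~\ref{prop:model}(\ref{model:shift}): a clause $\compl{M}\vee\compl{Q}$ lies in $\Aimpl{S}$ iff $S\cup M\modelst\compl{Q}$, and in particular, taking $Q=\emptyset$, $\compl{M}\in\Aimpl{S}$ iff $S\cup M$ is \tunsat. Thus the proposition is really a statement about when $\compl{M}$ is a non-redundant \timplicate of $S$.

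For the case where $M$ is \tunsat, I would note that $\compl{M}$ is then a \tautology, hence so is every clause $\compl{M}\vee\compl{Q}$; since $\submin$ discards all {\tautologies}, the whole of $\partimp{M}{A}{S}$ is deleted and $\partpi{M}{A}{S}=\emptyset$.

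For the case where $M$ is \tsat, the converse direction is immediate: if $\partpi{M}{A}{S}=\{\compl{M}\}$ then $\compl{M}\in\partimp{M}{A}{S}\subseteq\Aimpl{S}$, so $S\modelst\compl{M}$ and $S\cup M$ is \tunsat by the translation above. For the forward direction I would assume $S\cup M$ is \tunsat; then $\compl{M}\in\partimp{M}{A}{S}$, and $\compl{M}$ is not a \tautology because $M$ is \tsat. It remains to show that $\submin$ collapses $\partimp{M}{A}{S}$ to exactly $\{\compl{M}\}$. Every member $C\neq\compl{M}$ satisfies $\compl{M}\subsetneq C$, so $\compl{M}\modelst C$ and $\compl{M}\sord C$ (as $\sord$ agrees with inclusion); hence $C$ is deleted with witness $\compl{M}$.

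The step I expect to be the main obstacle is showing that $\compl{M}$ itself survives the pruning. Here I would use that every $C'\in\partimp{M}{A}{S}$ satisfies $\compl{M}\subseteq C'$, so $\compl{M}\modelst C'$ always holds; consequently the first disjunct of the deletion condition, $\compl{M}\not\modelst C'$, never fires, and the second disjunct would require $C'\sord\compl{M}$ with $\compl{M}\subseteq C'$, which is impossible since $\sord$ is a strict order agreeing with inclusion. Together with $\compl{M}$ being non-tautological, this shows $\compl{M}$ is not deleted, so it is the unique survivor and $\partpi{M}{A}{S}=\{\compl{M}\}$.
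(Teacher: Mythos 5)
Your proof is correct and follows essentially the same route as the paper's: it uses the fact that every clause of $\partimp{M}{A}{S}$ contains $\compl{M}$, so that $\compl{M}$ (non-tautological when $M$ is \tsat) subsumes and is smaller than every other member, and that all members become {\tautologies} when $M$ is \tunsat. Your extra care in checking that $\compl{M}$ itself survives the $\submin$ pruning is a welcome explicit detail that the paper leaves implicit.
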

\optionalproof{
If $S \cup M$ is \tunsat then $S \modelst \compl{M}$ and $\compl{M} \in \Aimpl{S}$, thus $\compl{M} \in \partimp{M}{A}{S}$ (by letting $Q = \emptyset$ in Definition \ref{def:partimp}). By definition, for any clause $C \in \partimp{M}{A}{S}$, we have $\compl{M} \subseteq C$; thus $\compl{M} \sordeq C$ and $\compl{M} \modelst C$. Since $\compl'M$ is not a {\tautology} by hypothesis, we deduce that $\submin(\partpi{M}{A}{S}) = \{ \compl{M} \}$. 
Conversely, if $\partpi{M}{A}{A} = \{ \compl{M} \}$ then $\compl{M} \in \Aimpl{S}$ by definition, hence $S \cup M$ is \tunsat.

If $M$ is \tunsat, then any clause containing $\compl{M}$ is a \tautology. Consequently, all clauses in $\partimp{M}{A}{S}$ are \tautologies, and $\partpi{M}{A}{S}$ is empty.
}

\begin{report}
\begin{proposition}
\label{prop:evalequiv}
Let $S$ be a set of formulas and $M \subseteq \A$.
If $S \cup M \equivt S' \cup M$, then $\partimp{M}{A}{S} = \partimp{M}{A}{S'}$.
\end{proposition}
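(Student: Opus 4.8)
The plan is to show that membership of a clause $C$ in $\partimp{M}{A}{S}$ can be reformulated as a condition that depends on $S$ only through the $\theory$-consequences of $S \cup M$. Since $S \cup M$ and $S' \cup M$ have exactly the same models of $\theory$, these consequences coincide, and the two sets must therefore be equal.

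First I would fix an arbitrary clause $C \in \partimp{M}{A}{S}$ and, by Definition \ref{def:partimp}, write $C = \compl{M} \vee \compl{Q}$ with $Q \subseteq A$ and $C \in \Aimpl{S}$, i.e. $\compl{C} \subseteq \A$ and $S \modelst C$. The syntactic conditions, namely that $C = \compl{M} \vee \compl{Q}$ for some $Q \subseteq A$ and that $\compl{C} = M \cup Q \subseteq \A$, do not mention $S$ at all, so it suffices to show that the semantic condition $S \modelst C$ is insensitive to replacing $S$ by $S'$.

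Next I would rewrite this semantic condition in terms of $S \cup M$. Applying Proposition \ref{prop:model}(\ref{model:shift}) once for each literal of $M$, the entailment $S \modelst \compl{M} \vee \compl{Q}$ is equivalent to $S \cup M \modelst \compl{Q}$; this is exactly the intuition recorded just after Definition \ref{def:partimp}. In the same way, $S' \modelst C$ is equivalent to $S' \cup M \modelst \compl{Q}$.

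Finally, because $S \cup M \equivt S' \cup M$, the two sets of formulas have the same models among the models of $\theory$; hence for every clause, and in particular for $\compl{Q}$, we have $S \cup M \modelst \compl{Q}$ iff $S' \cup M \modelst \compl{Q}$. Chaining the three equivalences yields $C \in \partimp{M}{A}{S}$ iff $C \in \partimp{M}{A}{S'}$, and since $C$ was arbitrary the two sets are equal. I do not expect any genuine obstacle here: the only point that requires care is the bookkeeping that the syntactic side conditions are independent of $S$, so that equivalence of the purely semantic conditions is enough, while the preservation of $\theory$-entailment under $\theory$-equivalence follows immediately from the semantic definition of $\modelst$.
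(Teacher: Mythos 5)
Your proposal is correct and follows essentially the same route as the paper: both isolate the syntactic conditions (which do not depend on $S$), reduce the semantic condition $S \modelst \compl{M} \vee \compl{Q}$ to a statement about $S \cup M$ (the paper phrases it as $\theory$-unsatisfiability of $S \cup M \cup Q$, you as $S \cup M \modelst \compl{Q}$, which are the same thing), and then transfer it across the hypothesis $S \cup M \equivt S' \cup M$. The only cosmetic difference is that you chain equivalences in one pass while the paper proves one inclusion and invokes symmetry.
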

\optionalproof{
Assume that $C \in \partimp{M}{A}{S}$. Then $C = \compl{M} \vee \compl{Q}$, with $Q \subseteq A$ and $C\in \Aimpl{S}$.
Since $C \in \Aimpl{S}$ we have $S \models C$, hence $S \cup M \cup Q \modelst \false$.
Since $S \cup M \equivt S' \cup M$ we deduce that $S' \cup M \cup Q \modelst \false$, hence $C \in \Aimpl{S'}$, and therefore $C \in \partimp{M}{A}{S'}$. 
Consequently,  $\partimp{M}{A}{S} \subseteq \partimp{M}{A}{S'}$. By symmetry, we deduce that $\partimp{M}{A}{S} = \partimp{M}{A}{S'}$.
}
\end{report}

\newcommand{\fixlit}[3]{\mathtt{fix}(#3,#2,#1)}

It is clear that it is useless to add a new hypothesis $l$ into $M$ both if $M \cup \{ l \}$ is \tunsat (because the obtained \timplicate would be a \tautology), or if this set is equivalent to $M$ (because the \timplicate would not be minimal). This motivates the following definition:
\begin{definition}
Let $S$ be a set of formulas and let $M,A$ be two sets of literals.
We denote by $\fixlit{A}{M}{S}$ a set
obtained by deleting from $A$ some literals
 $l$ such that either $M \cup S \modelst l$ or $M \modelst \compl{l}$.
\end{definition}
The use of this definition aims to reduce the number of abducible hypotheses to try, and thus the search space of the algorithm.
Still, we do not assume that all the literals $l$ satisfying the condition above are deleted because, in practice, such literals may be hard to detect.
However, we assume that no element from $M$ is in $\fixlit{A}{M}{S}$.
\begin{proposition}\label{prop:partpi_incr}
	Consider a set of formulas $S$ and two sets of literals $M,A$ such that $\partpi{M}{A}{S} \neq \set{\compl{M}}$. The following equalities hold:
	\begin{enumerate}
		\item $\partpi{M}{A}{S} = \submin(\bigcup_{l\in A
} \partpi{M\cup \set{l}}{A}{S})$.\label{it:union}
		\item $\partpi{M}{A}{S} = \partpi{M}{\fixlit{A}{M}{S}}{S}$.\label{it:unit}
	\end{enumerate}
\end{proposition}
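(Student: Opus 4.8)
The plan is to reduce both identities to a single structural lemma about $\submin$, and then to describe how $\partimp{M}{A}{S}$ transforms under splitting on an abducible literal and under shrinking the abducible set. The lemma I would isolate first concerns $\submin$ alone. Say that a clause $C$ \emph{dominates} a clause $D$ when $C \modelst D$ and either $D \not\modelst C$ or $C \sord D$; then $\submin(X)$ is exactly the set of non-\tautology clauses of $X$ that are dominated by no clause of $X$. I would check that domination is a strict partial order---using transitivity of $\modelst$ and of $\sord$, and the fact that $\sord$ refines inclusion---and that a \tautology dominates only \tautologies, so that every non-\tautology clause of a finite set $X$ either lies in $\submin(X)$ or is dominated by a clause of $\submin(X)$. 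From this the key lemma follows: if $V \subseteq U$ are finite clause sets with $\submin(U) \subseteq V$, then $\submin(U) = \submin(V)$. Indeed, a clause of $\submin(U)$ dominated inside $V$ would already be dominated inside the larger set $U$, giving $\submin(U) \subseteq \submin(V)$; and any clause of $U$ dominating some $D \in \submin(V)$ could be traded for a dominating clause of $\submin(U) \subseteq V$, contradicting minimality of $D$ in $V$, giving $\submin(V) \subseteq \submin(U)$. Applied to $U = \bigcup_i X_i$ and $V = \bigcup_i \submin(X_i)$---after the easy remark that a clause minimal in the union is minimal in whichever $X_i$ contains it, so $\submin(\bigcup_i X_i) \subseteq \bigcup_i \submin(X_i)$---this yields $\submin(\bigcup_i X_i) = \submin(\bigcup_i \submin(X_i))$.

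For item~\ref{it:union}, I would first dispose of the degenerate case where $M$ is \tunsat: then $M \cup \set{l}$ is \tunsat for each $l \in A$, and Proposition~\ref{prop:unsatM} makes both sides of the identity empty. Otherwise $M$ is \tsat, and as $\partpi{M}{A}{S} \neq \set{\compl{M}}$, Proposition~\ref{prop:unsatM} forces $S \cup M$ to be \tsat, \ie $\compl{M} \notin \Aimpl{S}$. I would then prove the set identity $\partimp{M}{A}{S} = \bigcup_{l \in A} \partimp{M \cup \set{l}}{A}{S}$. The inclusion $\supseteq$ is immediate, since any $\compl{M \cup \set{l}} \vee \compl{Q'}$ in the right-hand side equals $\compl{M} \vee \compl{Q}$ with $Q = \set{l} \cup Q' \subseteq A$. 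For $\subseteq$, any clause of $\partimp{M}{A}{S}$ other than $\compl{M}$ contains a literal $\compl{l}$ with $l \in A \setminus M$, and extracting it displays the clause as an element of $\partimp{M \cup \set{l}}{A}{S}$; the clause $\compl{M}$ itself does not occur, since $\compl{M} \notin \Aimpl{S}$. Applying $\submin$ and the key lemma with $X_l = \partimp{M \cup \set{l}}{A}{S}$ then gives item~\ref{it:union}.

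For item~\ref{it:unit}, set $A' = \fixlit{A}{M}{S}$. Since $A' \subseteq A$ we have $\partimp{M}{A'}{S} \subseteq \partimp{M}{A}{S}$, so by the key lemma it suffices to show $\submin(\partimp{M}{A}{S}) \subseteq \partimp{M}{A'}{S}$. Take $D = \compl{M} \vee \compl{Q} \in \submin(\partimp{M}{A}{S})$ with $Q = \compl{D} \setminus M \subseteq A$, and suppose some $l \in Q$ was deleted when forming $A'$, so that $M \modelst \compl{l}$ or $M \cup S \modelst l$. In the first case $M \cup \set{l} \subseteq \compl{D}$ is \tunsat, hence $\compl{D}$ is \tunsat and $D$ is a \tautology, contradicting $D \in \submin(\partimp{M}{A}{S})$. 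In the second case $S \cup M \cup (Q \setminus \set{l})$ remains \tunsat because the discarded literal is entailed by $S \cup M$, so $D' = \compl{M} \vee \compl{Q \setminus \set{l}}$ again lies in $\partimp{M}{A}{S}$; being a strict subclause of $D$, it dominates $D$ (as $\sord$ refines inclusion), contradicting the minimality of $D$. Hence $Q \subseteq A'$ and $D \in \partimp{M}{A'}{S}$, which completes the reduction.

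The step I expect to be most delicate is the structural lemma on $\submin$: since $\submin$ breaks ties among $\theory$-equivalent clauses through the order $\sord$, one must verify with care that it commutes with unions and that replacing each $X_i$ by $\submin(X_i)$ before the union discards nothing minimal. Once this is in place, both items amount to exhibiting the right pair $V \subseteq U$ with $\submin(U) \subseteq V$ and checking two entailment-based case splits, which rest only on Proposition~\ref{prop:model} and on $\sord$ refining inclusion.
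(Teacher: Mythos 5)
Your proof is correct and follows essentially the same route as the paper: both items are reduced to set identities on $\partimp{M}{A}{S}$ (splitting off a literal $l \notin M$ for item~\ref{it:union}, showing minimal clauses avoid deleted literals for item~\ref{it:unit}) and then handled through properties of $\submin$. You are in fact somewhat more careful than the paper's own proof, which leaves implicit the structural lemma on $\submin$ (its interaction with unions and the inclusion $\partpi{M}{\fixlit{A}{M}{S}}{S} \subseteq \partpi{M}{A}{S}$) and does not separately treat the degenerate case where $M$ is \tunsat.
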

\optionalproof{
\hfill
\begin{enumerate}
\item{
It suffices to prove that $\partimp{M}{A}{S} = \bigcup_{l \in A \setminus M} \partimp{M\cup \set{l}}{A}{S})$.
	Let $C\in \partimp{M}{A}{S}$. By hypothesis, $C$ is of the form $\compl{M} \vee \compl{Q}$, where $Q \subseteq A$ and $M \cap Q = \emptyset$. Since $\partpi{M}{A}{S} \neq \set{\compl{M}}$, necessarily $C \not = \compl{M}$ and $Q\neq \emptyset$. Let $l\in Q$ and $m = \compl{l}$.
	We have $C = \compl{M} \vee m \vee \compl{Q'}$, with $Q' = Q \setminus \set{l}$, and since $l\in A$, $C\in \partimp{M\cup \set{l}}{A}{S}$. 
Conversely, if $C \in \partimp{M \cup \set{l}}{A}{S}$ with $l \in A$, 
then $C \in \Aimpl{S}$ and $C = \compl{M} \vee \compl{l}  \vee Q$, for some $Q \subseteq A$, so that $C \in \partimp{M}{A}{S}$.}
\item{Since $\fixlit{A}{M}{S}\subseteq A$, we have $\partpi{M}{\fixlit{A}{M}{S}}{S}\subseteq \partpi{M}{A}{S}$. Now let $m$ be a literal in $A$ such that either $M \cup S \models m$ or $M \models \compl{m}$. 
If $C\in \submin(\partimp{M}{A}{S})$ is of the form $\compl{M} \vee \compl{Q}$, then $Q$ cannot contain $m$: indeed, in the former case  $\compl{m}$ could be removed from $S \cup \compl{C}$ while preserving equivalence, hence the implicate would not be minimal, and in the later case $C$ would be a \tautology.}
\end{enumerate}
}
The results above lead to a basic algorithm for generating {\timplicate}s which is described in Algorithm \ref{alg:basalgo}. As explained above, the algorithm works by adding literals from $A$ as hypotheses until a contradiction can be derived. The \textbf{return} statement at Line \ref{line:basret} avoids enumerating the subsets that contain $M$, once it is known that $S\cup M$ is \tunsat.
 \begin{algorithm}
 	\caption{\basalgo($S, M, A$)}\label{alg:basalgo}
 	\SetKw{Let}{let}
 	\SetKw{Return}{return}
 	\If{\UNSAT{M}}
 	{\Return $\emptyset$\;\label{line:unsatret}}
 	\Else{
 	\If{\UNSAT{S\cup M}}
{\Return $\set{\compl{M}}$\;\label{line:basret}}
 	\Else{$B = \fixlit{A}{M}{S}$\;
 		\ForEach{$l \in B$  }
 		{\Let{$P_l = \basalgo(S, M\cup \set{l}, B)$}\label{line:basa}\;	 		
 		}
	 	\Return $\submin(\bigcup_{l\in B}P_l)$\;
	 }}
 \end{algorithm}

\begin{lemma}
	$\partpi{M}{A}{S} = \basalgo(S, M, A)$.
\end{lemma}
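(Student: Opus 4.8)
The plan is to prove the identity by induction on the measure $\card{A \setminus M}$, which mirrors the recursive structure of $\basalgo$: in the recursive calls on Line~\ref{line:basa} the second argument is $M \cup \set{l}$ with $l \in B = \fixlit{A}{M}{S}$, and the third is $B$. Since $\fixlit{A}{M}{S} \subseteq A$ and, by assumption, $\fixlit{A}{M}{S} \cap M = \emptyset$, we have $B \subseteq A \setminus M$ and $l \notin M$, so $\card{B \setminus (M \cup \set{l})} = \card{B} - 1 \leq \card{A \setminus M} - 1 < \card{A \setminus M}$. This makes the measure strictly decrease at each recursive call, so the induction is well-founded and the recursion terminates.

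First I would dispatch the two base cases, which correspond to the early-return branches of the algorithm. If $M$ is \tunsat, then $\basalgo$ returns $\emptyset$ on Line~\ref{line:unsatret}, and Proposition~\ref{prop:unsatM} gives $\partpi{M}{A}{S} = \emptyset$ directly. If $M$ is \tsat\ but $S \cup M$ is \tunsat, then $\basalgo$ returns $\set{\compl{M}}$ on Line~\ref{line:basret}, and the first part of Proposition~\ref{prop:unsatM} gives $\partpi{M}{A}{S} = \set{\compl{M}}$. In both cases the identity holds with no appeal to the induction hypothesis.

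For the inductive (and main) case, assume $M$ and $S \cup M$ are both \tsat. The key preliminary observation is that, by the contrapositive of Proposition~\ref{prop:unsatM} (its iff holds since $M$ is \tsat), $S \cup M$ being \tsat\ forces $\partpi{M}{A}{S} \neq \set{\compl{M}}$; this is precisely the hypothesis required to invoke Proposition~\ref{prop:partpi_incr}. Writing $B = \fixlit{A}{M}{S}$, part~\ref{it:unit} of that proposition yields $\partpi{M}{A}{S} = \partpi{M}{B}{S}$, and since this set still differs from $\set{\compl{M}}$, part~\ref{it:union} applied with $B$ in place of $A$ gives $\partpi{M}{B}{S} = \submin(\bigcup_{l \in B} \partpi{M \cup \set{l}}{B}{S})$. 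Now the induction hypothesis applies to each summand (the measure having decreased as shown above), so $\partpi{M \cup \set{l}}{B}{S} = \basalgo(S, M \cup \set{l}, B) = P_l$ for every $l \in B$. Chaining these equalities delivers $\partpi{M}{A}{S} = \submin(\bigcup_{l \in B} P_l) = \basalgo(S, M, A)$, which is exactly the value returned by the algorithm in this branch.

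The only genuinely delicate point is the inductive case: one must correctly thread together both parts of Proposition~\ref{prop:partpi_incr} — first restricting $A$ to $B$, then decomposing over the literals of $B$ — and verify that the non-triviality condition $\partpi{\cdot}{\cdot}{S} \neq \set{\compl{M}}$ holds at the moment each part is applied. The bookkeeping for the decreasing measure, namely that $B$ excludes $M$ and that $l$ is removed from the available literals in the recursive call, is routine but essential for the induction to be well-founded; everything else reduces to substituting the induction hypothesis and reading off the algorithm's return value.
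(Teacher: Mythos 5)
Your proof is correct and follows essentially the same route as the paper's: induction on the number of remaining literals, with Proposition~\ref{prop:unsatM} handling the two early-return branches and the two parts of Proposition~\ref{prop:partpi_incr} handling the recursive case. If anything, your version is slightly more careful than the paper's in checking that the hypothesis $\partpi{M}{A}{S} \neq \set{\compl{M}}$ is available each time Proposition~\ref{prop:partpi_incr} is invoked, and in verifying that the measure strictly decreases.
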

\begin{proof}
	The result is proved by  a straightforward induction on $\card{\A\setminus M}$. By Proposition \ref{prop:unsatM}, $\partpi{M}{A}{S} = \emptyset$ if $M$ is \tunsat,
    and
	$\partpi{M}{A}{S} = \set{\compl{M}}$ if $M$ is \tsat and $S \cup M$ is \tunsat.
    Otherwise, by Proposition \ref{prop:partpi_incr}(\ref{it:unit}), we have $\partpi{M}{A}{S} = \partpi{M}{\fixlit{A}{M}{S}}{S} = \partpi{M}{B}{S}$.
	By the induction hypothesis, for each $l \in A$, $P_l = \partpi{M\cup \set{l}}{B}{S}$, and
 by Proposition \ref{prop:partpi_incr}(\ref{it:union}), $P_l = \partpi{M\cup \set{l}}{A}{S}$; 
 we deduce that $\partpi{M}{A}{S} = \submin(\bigcup_{l\in A}P_l)$.
 Note that at each recursive call, a new element is added to $M$, since $\fixlit{A}{M}{S}$ is assumed not to contain any element from $M$.
\end{proof}
\begin{theorem}
If $S$ is a set of formulas then
	$\PAimpl{S} = \basalgo(S, \emptyset, \A)$.
\end{theorem}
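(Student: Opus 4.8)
The plan is to combine the just-proved lemma with the remark recorded right after the definition of $\submin$. The final theorem states $\PAimpl{S} = \basalgo(S,\emptyset,\A)$, and the preceding lemma already gives us $\partpi{M}{A}{S} = \basalgo(S,M,A)$ for arbitrary $M,A$. So the entire task reduces to relating $\PAimpl{S}$ to $\partpi{\emptyset}{\A}{S}$ by specializing the lemma to $M = \emptyset$ and $A = \A$.

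First I would instantiate the lemma at $M = \emptyset$ and $A = \A$, obtaining $\basalgo(S,\emptyset,\A) = \partpi{\emptyset}{\A}{S} = \submin(\partimp{\emptyset}{\A}{S})$ by unfolding the definition of $\partpi{}{}{}$ in Definition~\ref{def:partimp}. Next I would show that $\partimp{\emptyset}{\A}{S} = \Aimpl{S}$. Indeed, by Definition~\ref{def:partimp}, a clause $C$ lies in $\partimp{\emptyset}{\A}{S}$ iff $C \in \Aimpl{S}$ and $C = \compl{\emptyset} \vee \compl{Q}$ for some $Q \subseteq \A$; since $\compl{\emptyset} = \false$, this simply says $C = \compl{Q}$ with $Q \subseteq \A$. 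But the definition of a \timplicate of $S$ already requires $\compl{C} \subseteq \A$, which is exactly the condition that $C = \compl{Q}$ for some $Q \subseteq \A$. Hence the constraint is vacuous and $\partimp{\emptyset}{\A}{S} = \Aimpl{S}$.

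Finally I would invoke the remark stated immediately after the definition of $\submin$, namely $\PAimpl{S} \eqsub \submin(\Aimpl{S})$. Chaining these identities yields
\[
\basalgo(S,\emptyset,\A) = \submin(\partimp{\emptyset}{\A}{S}) = \submin(\Aimpl{S}) \eqsub \PAimpl{S},
\]
which is the desired conclusion (up to $\theory$-equivalence, which is the sense in which prime implicates are ever determined, since they are only defined up to logical equivalence).

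The only genuine subtlety, and the step I expect to require the most care, is making the $M = \emptyset$ base case of the definitions behave correctly: one must check that $\compl{\emptyset}$ is the empty clause $\false$ and that taking $Q \subseteq A = \A$ ranges over exactly the sets whose complement-clauses are the abducible-based clauses, so that no implicate of $S$ is inadvertently excluded. Everything else is bookkeeping: the equality $\partimp{\emptyset}{\A}{S} = \Aimpl{S}$ and the cited remark do all the work, and the heavy lifting of the correctness argument has already been discharged in the lemma via its induction on $\card{\A \setminus M}$.
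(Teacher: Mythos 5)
Your proposal is correct and matches the paper's intent exactly: the theorem is stated as an immediate corollary of the preceding lemma, obtained by instantiating $M=\emptyset$, $A=\A$, observing that $\partimp{\emptyset}{\A}{S}=\Aimpl{S}$, and invoking the remark $\PAimpl{S}\eqsub\submin(\Aimpl{S})$. The only caveat, which you already flag, is that this chain yields equality modulo $\eqsub$ rather than literal set equality (since $\submin$ keeps only the $\sord$-minimal representative of each $\theory$-equivalence class), so the paper's ``$=$'' in the theorem statement is best read as equality up to $\theory$-equivalence.
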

Although the algorithm described above computes all the prime {\timplicate}s of any clause set as required, it is very inefficient, in particular because of the large number of useless and redundant recursive calls that are made.
In what follows we present several improvements to the algorithm in order to generate implicates as efficiently as possible.

\subsection{Restricting the set of candidate hypotheses}

\label{sect:improve}

It is obvious that the algorithm {\basalgo} makes a lot of redundant calls: for example, if $l_1\vee l_2$ is a prime \timplicate of a clause set $S$, then this \timplicate will be generated twice, first as $l_1\vee l_2$, and then as $l_2\vee l_1$. Such redundant calls are quite straightforward to avoid by ensuring that every invocation of the algorithm contains a distinct set of literals $M$. This can be done by fixing an ordering $<$ among literals in $\A$, and by assuming that hypotheses are always added in this order.
Another way of restricting the set of candidate hypotheses is to exploit  information extracted from the previous satisfiability test.
For example, if $S\cup\set{l_1}$ is satisfiable for some literal $l_1$, and that a model of this set satisfies another literal $l_2$, then $S\cup \set{l_2}$ is also satisfiable and it is not necessary to consider $l_2$ as a hypothesis.
In particular, if a model of $S\cup \set{l_1}$ validates all the literals in $A$, then $\partpi{M}{A}{S}$ is necessarily empty and no literal should be selected.
We can thus take advantage of the existence of a model of $S \cup M$ in order to guide the choice of the next literals in $A$. 
However, observe that this refinement interferes with the previous one based on the order $<$. Indeed, non-minimal hypotheses will have to be considered if all the smaller hypotheses  are dismissed because they are true in the model. We  formalize these principles below.

\begin{definition}
\label{def:subsetlit}
	In what follows, we consider a total ordering\footnote{Note that this ordering is not necessarily related to the ordering $\sord$ on clauses.} $<$ on the elements of $\A$.
For $A \subseteq \A$ and $l\in A$, we define $\subsetlit{A}{l} \isdef \setof{l' \in A}{l < l'}$.
If $I$ is a set of literals then
we denote by $\intsubsetlit{A}{l}{I}$ the set
$\setof{l' \in A}{l' < l \wedge \compl{l'} \not \in I} \cup \subsetlit{A}{l}$.
\end{definition}
\begin{example}
	Assume that $A = \setof{p_i, \neg p_i}{i = 1, \ldots, 6}$ and that for all literals $l \in \set{p_i, \neg p_i}$ and $l' \in \set{p_j, \neg p_j}$, $l < l'$ if and only if either $i < j$ or ($i = j$, $l = p_i$ and $l' = \neg p_i$). Then $A[p_4] = \set{\neg p_4, p_5, \neg p_5, p_6, \neg p_6}$.	
	If $I = \set{p_1, \neg p_2}$, then $\intsubsetlit{A}{p_4}{I} = \set{p_1,  \neg p_2, p_3, \neg p_3, \neg p_4, p_5, \neg p_5, p_6, \neg p_6}$.
\end{example}

\newcommand{\scompat}[1]{{#1}-\textrm{compatible}}
\begin{definition}\label{def:scompat}
	Let $S$ be a set of clauses. A set of literals $I$ is \emph{\scompat{$S$} with respect to $\A$} (or simply \emph{\scompat{$S$}}) if every prime \timplicate of $S$ contains a literal $l$ such that $l \in I$.
\end{definition}

Intuitively, an \scompat{$S$} set $I$ consists of literals $l$ such that $\compl{l}$ will be allowed to be added as a hypothesis to generate {\timplicate}s of $S$ (see Lemma \ref{lm:ucm} below). 
The set $I$ can always be defined by taking the negations of all the abducible literals from $\A$. In this case, all literals will remain possible hypotheses. It is possible, however, to restrict the size of $I$ when a
model of $S$ is known, as evidenced by the following proposition:

\begin{proposition}\label{prop:scompmodel}
	If $S$ is a set of clauses and $J$ is a 
	model of $S$, then the set $I\isdef \setof{l \in \Lit}{J\models l}$ is \scompat{$S$}.
\end{proposition}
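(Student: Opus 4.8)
The plan is to unfold the definition of $S$-compatibility and verify it directly, observing that the condition concerns prime {\timplicate}s but actually holds for arbitrary {\timplicate}s. First I would fix an arbitrary prime \timplicate $C$ of $S$; by definition this means $\compl{C} \subseteq \A$ and $S \modelst C$. The crucial point is that $J$ is a model of $S$ \emph{in the sense used throughout modulo $\theory$}, namely $J \models \theory$ and $J \models S$. Hence the entailment $S \modelst C$ — which by definition means that every model of $\theory$ satisfying $S$ also satisfies $C$ — applies to $J$ and yields $J \models C$.

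Next I would rule out the empty clause. Since $J$ is a model of $S$ together with $\theory$, the set $S$ is \tsat, so $S \modelst \false$ fails and $\false$ is not a \timplicate of $S$; in particular $C = l_1 \vee \cdots \vee l_n$ with $n \geq 1$. An interpretation satisfies a clause exactly when it satisfies at least one of its literals, so $J \models C$ provides some index $i$ with $J \models l_i$, i.e.\ $l_i \in I$ by the definition $I = \setof{l \in \Lit}{J \models l}$. Thus $C$ contains a literal of $I$, and since $C$ was an arbitrary prime \timplicate, $I$ is \scompat{$S$}.

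The argument requires essentially no computation; the only two points demanding care are the reading of ``model of $S$'' as a model of $\theory \cup S$ (the standard notion modulo $\theory$), without which the step $J \models C$ would be unjustified, and the exclusion of $C = \false$, for which $J \models C$ would be vacuously false. I expect this edge case to be the sole subtlety, and it is dispatched immediately by noting that the existence of the model $J$ makes $S$ \tsat. Note finally that primality of $C$ is never used — the conclusion holds for every \timplicate — but since the definition of compatibility quantifies only over prime {\timplicate}s, nothing further is needed.
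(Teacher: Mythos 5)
Your proof is correct and is in substance the same as the paper's: the paper argues contrapositively (if no literal of the prime \timplicate $\compl{Q}$ lies in $I$, then $J\models l$ for every $l\in Q$, so $J\models S\cup Q$ and $\compl{Q}$ cannot be a \timplicate), while you argue directly that $J\modelst C$ forces some literal of $C$ into $I$. Your explicit treatment of the empty clause and of ``model of $S$'' as a model of $\theory\cup S$ only makes explicit what the paper's version handles implicitly.
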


\begin{proof}
	Let $Q$ be a set of literals such that $\compl{Q}$ is a prime \timplicate of $S$, and assume that for all $l\in Q$, $\compl{l} \not \in I$, i.e., that for all $l\in Q$, $J\not \models \compl{l}$. Then $J \models l$ holds for every $l \in Q$, hence 
	$J\models S\cup Q$ and $\compl{Q}$ cannot be
	a \timplicate of $S$.
\end{proof}
Note that the condition of having a
model of $S$ was not added to Definition \ref{def:scompat} because in practice, such a model cannot always be constructed efficiently.

\newcommand{\evaluate}[2]{#1_{#2,M}}

 Being able to derive unit consequences of the set of axioms (for instance by using unit propagation), can pay off if this additional information can be used to simplify the formula at hand. This motivates the following definition.
\begin{definition}
\label{def:su}
Let $S$ be a set of formulas and $M \subseteq \A$.
	We denote by $\ucons{M}{S}$ the set of unit clauses logically entailed by $S\cup M$ modulo $\theory$, i.e., $\ucons{M}{S} \isdef \setof{l \in \Lit}{S\cup M\modelst l}$. Given a set $U$ such that $M\subseteq U\subseteq \ucons{M}{S}$, we denote by $\evaluate{S}{U}$ the formula obtained from $S$ by replacing {\em some} (arbitrarily chosen) literals $l'$
		by $\false$ if $U\modelst \compl{l'}$ 
		and by $\true$ if $M \modelst l'$.
\end{definition}
Note that $U$ is not necessarily identical to $\ucons{M}{S}$, because in practice the latter set is hard to generate. Similarly we do not assume that all literals $l'$ are replaced in Definition \ref{def:su}
since testing logical entailment may be costly. Lemma \ref{lem:evalimp} shows that the {\timplicate}s of a set $S$ and those of $\evaluate{S}{U}$ are identical.
\begin{report}
The proof uses the following proposition.
\begin{proposition}
\label{prop:impequiv}
Let $S$ be a set of formulas and $M \subseteq \A$.
Consider a set of literals $U$ such that $M\subseteq U\subseteq \ucons{M}{S}$.
Then $S\cup M \equivt \evaluate{S}{U}\cup M$. 
\end{proposition}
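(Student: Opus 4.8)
The plan is to establish the two $\modelst$-entailments constituting the claimed $\theory$-equivalence $S\cup M \equivt \evaluate{S}{U}\cup M$, after first isolating the one fact that connects $U$ to $S\cup M$. Since $M\subseteq U\subseteq \ucons{M}{S}$, every literal $u\in U$ satisfies $S\cup M \modelst u$, so $S\cup M \modelst U$ (which is just a restatement of $S\cup M \equivt S\cup U$). The whole argument then amounts to controlling, occurrence by occurrence, the two kinds of replacements used to build $\evaluate{S}{U}$: a literal $l'$ turned into $\false$ satisfies $U\modelst \compl{l'}$, hence $S\cup M\modelst \compl{l'}$, while a literal $l'$ turned into $\true$ satisfies $M\modelst l'$.

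For the first entailment $S\cup M \modelst \evaluate{S}{U}\cup M$, I would argue purely semantically. Let $I$ be any model of $\theory$ with $I\models S\cup M$. Then $I\models U$ by the observation above, so for each literal $l'$ replaced by $\false$ we get $I\models \compl{l'}$, i.e. $l'$ is false in $I$, matching the substituted constant; likewise, for each literal replaced by $\true$ we have $I\models M$, hence $I\models l'$, again matching. Every substitution therefore replaces a literal by a constant equal to its truth value in $I$, so $S$ and $\evaluate{S}{U}$ take the same value under $I$; since $I\models S$ we conclude $I\models \evaluate{S}{U}$, and with $I\models M$ this gives $I\models \evaluate{S}{U}\cup M$.

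The converse entailment $\evaluate{S}{U}\cup M \modelst S\cup M$ is the delicate part, and I expect it to be the \emph{main obstacle}: here the symmetric model-theoretic argument breaks down, because a model $I$ of $\evaluate{S}{U}\cup M$ need not satisfy $U$ a priori (the membership $U\subseteq\ucons{M}{S}$ relates $U$ to $S$, not to $\evaluate{S}{U}$), so we cannot assume $I\models\compl{l'}$ for the literals erased into $\false$. I would instead exploit the clausal structure and prove clause by clause that $\evaluate{S}{U}\cup M \modelst C$ for every $C\in S$. If some disjunct of $C$ was replaced by $\true$, then that disjunct $l'$ satisfies $M\modelst l'$, whence $M\modelst C$ and $C$ is entailed through $M$. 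Otherwise the image of $C$ in $\evaluate{S}{U}$ is obtained by deleting the disjuncts sent to $\false$, i.e. a sub-clause $C'$ with $C'\modelst C$; since $C'\in\evaluate{S}{U}$ (or $C'=\false$, making $\evaluate{S}{U}\cup M$ unsatisfiable and hence entailing everything), we again obtain $\evaluate{S}{U}\modelst C$.

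Combining the two cases gives $\evaluate{S}{U}\cup M\modelst S$, and since $S\cup M\modelst U$ was established at the outset, this in turn yields $\evaluate{S}{U}\cup M \modelst U$ and therefore $\evaluate{S}{U}\cup M \modelst S\cup M$. The key point that powers the hard direction is the monotonicity of clauses under disjunct deletion ($C'\modelst C$ when $C'$ is a sub-clause of $C$), which is exactly what fails for unrestricted formulas and what makes the clausal reading essential.
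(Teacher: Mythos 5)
Your proof is correct and follows essentially the same route as the paper's: the direction $S\cup M \modelst \evaluate{S}{U}\cup M$ is obtained exactly as in the paper, from $S\cup M\modelst U$ and the fact that each replaced literal is equivalent, modulo $S\cup M$, to the constant substituted for it. The converse direction is dismissed by the paper as ``clear,'' and you supply the missing justification in the intended way, correctly pinpointing that it is \emph{not} a symmetric semantic argument but rests on the clausal structure of $S$ (a $\true$-replacement makes the clause $M$-entailed since $M\modelst l'$, while a $\false$-replacement yields a subclause, which entails the original clause).
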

\begin{proof}
We have $S \cup M \models \evaluate{S}{U}$, since $S \cup M \modelst U$, and
 $U \models (l' \Leftrightarrow \false)$ if $U \models \compl{l'}$.
Conversely, it is clear that $\evaluate{S}{U} \cup M \models S \cup M$.
\end{proof}
\end{report}
\begin{lemma}
\label{lem:evalimp}
Let $S$ be a set of formulas and $M \subseteq \A$.
Consider a set of literals $U$ such that $M\subseteq U\subseteq \ucons{M}{S}$.
Then
$\partimp{M}{A}{S} = \partimp{M}{A}{\evaluate{S}{U}}$
\end{lemma}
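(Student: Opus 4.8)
The plan is to reduce the lemma to two facts that have already been isolated in the excerpt: the $\theory$-equivalence of $S \cup M$ and $\evaluate{S}{U} \cup M$, and the invariance of the partial implicates under such an equivalence.

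First I would invoke Proposition \ref{prop:impequiv}. The hypothesis $M \subseteq U \subseteq \ucons{M}{S}$ is exactly what that proposition requires, so it yields $S \cup M \equivt \evaluate{S}{U} \cup M$. Intuitively this holds because every literal replaced by $\false$ in $\evaluate{S}{U}$ is entailed to be false by $U$ (hence by $S \cup M$), while every literal replaced by $\true$ is entailed by $M$; thus the replacement preserves $\theory$-equivalence once $M$ is adjoined on both sides.

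Then, since $M \subseteq \A$, I would apply Proposition \ref{prop:evalequiv} taking $S' \isdef \evaluate{S}{U}$. That proposition states precisely that $S \cup M \equivt S' \cup M$ implies $\partimp{M}{A}{S} = \partimp{M}{A}{S'}$, so substituting $S' = \evaluate{S}{U}$ gives the desired equality $\partimp{M}{A}{S} = \partimp{M}{A}{\evaluate{S}{U}}$, which concludes the proof.

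I do not expect a genuine obstacle: the real content has been offloaded to the two supporting propositions. Were I to argue directly instead, the key step would be the observation — following from Proposition \ref{prop:model}(\ref{model:shift}) — that a clause $\compl{M} \vee \compl{Q}$ lies in $\partimp{M}{A}{S}$ if and only if $S \cup M \cup Q$ is \tunsat; since this condition depends on $S \cup M$ only up to $\theory$-equivalence, and adjoining the extra literals $Q$ preserves the equivalence, the two sets of partial implicates must coincide. The one subtlety to keep in mind is that $\evaluate{S}{U}$ is defined relative to the fixed $M$ (its subscript carries $M$), so the equivalence must be established with $M$ conjoined on each side rather than for $S$ and $\evaluate{S}{U}$ in isolation.
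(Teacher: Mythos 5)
Your proof is correct and follows exactly the paper's own route: the paper proves this lemma as an immediate consequence of Propositions \ref{prop:impequiv} and \ref{prop:evalequiv}, which is precisely the two-step reduction you describe. Your closing remark about why the equivalence must be taken with $M$ conjoined on both sides is a sound observation that the paper leaves implicit.
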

\optionalproof{
This is an immediate consequence of Propositions \ref{prop:evalequiv} and \ref{prop:impequiv}.
}
\begin{definition}
Let $U,M,A$ be sets of literals. We define:
$\uprop{U}{A}{S} \isdef \setof{\compl{M}\vee \compl{l}}{l \in {A}\wedge \compl{l}\in  U}$. 
\end{definition}

The lemma below can be viewed as a refinement of Proposition \ref{prop:partpi_incr}. It is based on the previous results, according to which, when adding a new hypothesis $l$, it is possible to remove from the set of abducible literals $A$ every literal that is strictly smaller than $l$, provided its complementary is in $I$ (because we can always assume that the smallest available hypothesis is considered first).
This is why the recursive call is on  $\intsubsetlit{A}{l}{I}$ instead of $A$. 
Note also that the use of semantic guidance interferes with the use of the ordering $<$: the smaller the set $I$, the larger $\intsubsetlit{A}{l}{I}$.

\begin{lemma}\label{lm:ucm}
	Assume that $S \cup M$ is \tsat and
	let $I$ be an \scompat{$(S\cup M)$} set of literals.
Let $U$ be a set of literals such that $M\subseteq U\subseteq \ucons{M}{S}$.
We have
 \[\partpi{M}{A}{S}\ =\ \submin\left(\uprop{U}{A}{S} \cup \bigcup_{l\in A, \compl{l} \in I} \partpi{M \cup \set{l}}{\intsubsetlit{A}{l}{I}}{S}\right).\]
\end{lemma}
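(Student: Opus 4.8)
The plan is to reduce the claimed identity to a subsumption-equivalence between the underlying implicate sets, and then to read off the conclusion from the properties of $\submin$. Write $\Phi = \partimp{M}{A}{S}$ and let $R = \uprop{U}{A}{S} \cup \bigcup_{l \in A,\, \compl{l} \in I} \partimp{M \cup \set{l}}{\intsubsetlit{A}{l}{I}}{S}$ be the right-hand side in which every inner $\partpi$ has been replaced by the corresponding $\partimp$. Since $\partpi{M \cup \set{l}}{\intsubsetlit{A}{l}{I}}{S} = \submin(\partimp{M \cup \set{l}}{\intsubsetlit{A}{l}{I}}{S})$ is $\eqsub$ to $\partimp{M \cup \set{l}}{\intsubsetlit{A}{l}{I}}{S}$, applying $\submin$ to the stated right-hand side yields the same set as $\submin(R)$; as the left-hand side is $\submin(\Phi)$ by definition, it suffices to prove $\Phi \eqsub R$ and then conclude $\submin(\Phi) = \submin(R)$ from the fact that $\submin$ returns the $\modelst$-minimal non-tautological clauses (with $\sord$ as tie-break) and so depends only on the $\eqsub$-class of its argument. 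I expect this reduction and the final $\submin$ bookkeeping to be routine.

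For the direction $\Phi \gsub R$ it is enough to check $R \subseteq \Phi$. A clause $\compl{M} \vee \compl{l} \in \uprop{U}{A}{S}$ has $l \in A$ and $\compl{l} \in U \subseteq \ucons{M}{S}$, hence $S \cup M \modelst \compl{l}$ and $\compl{M} \vee \compl{l} \in \Aimpl{S}$; moreover $l \notin M$ (otherwise $M \modelst l$ together with $S\cup M\modelst\compl l$ would make $S \cup M$ \tunsat, against the hypothesis), so this clause has the form $\compl{M} \vee \compl{Q}$ with $Q = \set{l} \subseteq A$ and lies in $\Phi$. Likewise, any $C \in \partimp{M \cup \set{l}}{\intsubsetlit{A}{l}{I}}{S}$ has the form $\compl{M} \vee \compl{l} \vee \compl{Q'}$ with $\set{l} \cup Q' \subseteq A$ and $C \in \Aimpl{S}$, so $C \in \Phi$. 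Thus $R \subseteq \Phi$, which gives $\Phi \gsub R$.

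The substantial direction is $R \gsub \Phi$: every $C = \compl{M} \vee \compl{Q} \in \Phi$ (with $Q \subseteq A$ and $Q \cap M = \emptyset$) must be entailed by some clause of $R$. Since $S \cup M$ is \tsat, Proposition \ref{prop:unsatM} forbids $C = \compl{M}$, so $Q \neq \emptyset$ and $\compl{Q}$ is a \timplicate of $S \cup M$. If some $l \in Q$ satisfies $\compl{l} \in U$, then $\compl{M} \vee \compl{l} \in \uprop{U}{A}{S}$ is a subclause of $C$ and entails it. Otherwise, since $\compl{Q}$ is a \timplicate of $S \cup M$ and $I$ is \scompat{$(S\cup M)$}, I obtain a literal of $Q$ whose complement lies in $I$ (this is precisely the model argument used in the proof of Proposition \ref{prop:scompmodel}). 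Let $l$ be the $<$-minimal element of the nonempty set $\setof{l' \in Q}{\compl{l'} \in I}$; then $l \in A$ and $\compl{l} \in I$, so $l$ indexes the union in $R$. For $l' \in Q \setminus \set{l}$, totality of $<$ gives either $l < l'$, whence $l' \in \subsetlit{A}{l}$, or $l' < l$, in which case $\compl{l'} \notin I$ by minimality of $l$, so that $l'$ belongs to $\setof{l'' \in A}{l'' < l \wedge \compl{l''} \notin I}$. Hence $Q \setminus \set{l} \subseteq \intsubsetlit{A}{l}{I}$, and therefore $C = \compl{M \cup \set{l}} \vee \compl{Q \setminus \set{l}} \in \partimp{M \cup \set{l}}{\intsubsetlit{A}{l}{I}}{S} \subseteq R$, so $C$ is entailed by a clause of $R$. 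This establishes $R \gsub \Phi$, completing $\Phi \eqsub R$.

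The main obstacle is this last direction. The semantic heart is the appeal to \scompat{}-ness to produce a literal $l \in Q$ with $\compl{l} \in I$; this step is transparent for the compatible sets that actually arise (the model-induced sets of Proposition \ref{prop:scompmodel} and the default choice $I = \setof{\compl{a}}{a \in \A}$), and is where some care is needed. The combinatorial heart is then the verification that choosing the $<$-\emph{minimal} admissible literal $l$ keeps the remaining literals of $Q$ inside $\intsubsetlit{A}{l}{I}$: this is exactly the point at which the two refinements — the literal ordering $<$ and the model-guided set $I$ — are reconciled, and it is the reason the recursion must be carried out on $\intsubsetlit{A}{l}{I}$ rather than on $A$ or on $\subsetlit{A}{l}$.
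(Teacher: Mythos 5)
Your proposal follows essentially the same route as the paper's proof: the same case split on whether $Q$ meets $\compl{U}$, the same choice of the $<$-minimal $l \in Q$ with $\compl{l} \in I$, the same verification that $Q \setminus \set{l} \subseteq \intsubsetlit{A}{l}{I}$ reconciling the ordering with the semantic guidance, and the same reverse inclusion into $\partimp{M}{A}{S}$.

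Two pieces of your bookkeeping do not hold as stated, although both are repairable from facts you already establish. First, $\submin$ does \emph{not} depend only on the $\eqsub$-class of its argument: for two distinct but $\theory$-equivalent clauses $C,D$ one has $\set{C} \eqsub \set{D}$ while $\submin(\set{C}) = \set{C} \neq \set{D} = \submin(\set{D})$, and the lemma asserts a set equality, not an $\eqsub$-equivalence. The principle you actually need is the sandwich property: if $\submin(X) \subseteq Y \subseteq X$ then $\submin(Y) = \submin(X)$. Your argument delivers exactly these inclusions, because in your case (i) a clause of $\submin(\Phi)$ that is entailed by the subclause $\compl{M}\vee\compl{l'} \in R$ must (by minimality and the compatibility of $\sord$ with inclusion) \emph{equal} that subclause, and in case (ii) you prove membership in $R$ rather than mere entailment; so $\submin(\Phi) \subseteq R \subseteq \Phi$ and the conclusion follows. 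The same device handles the replacement of the inner $\partpi$ by $\partimp$. Second, you apply Definition \ref{def:scompat} to an arbitrary implicate $\compl{Q}$ of $S \cup M$, whereas the definition only guarantees a literal of $I$ inside \emph{prime} {\timplicate}s; for a non-prime $\compl{Q}$ and a general \scompat{$(S\cup M)$} set $I$ the conclusion can fail, since a prime implicate may $\theory$-entail $\compl{Q}$ without being a subclause of it (the model-based sets of Proposition \ref{prop:scompmodel} are special in that they do cover all implicates). You flag this yourself, and the fix coincides with the one forced by the first point: run the hard direction only over $C \in \submin(\Phi)$, which is what the paper does by arguing directly about elements of $\partpi{M}{A}{S}$.
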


\begin{proof}
	First note that $\partpi{M}{A}{S} \neq \set{\compl{M}}$, since $S \cup M$ is \tsat.
	We first prove that $\partpi{M}{A}{S} \subseteq \uprop{U}{A}{S} \cup \bigcup_{l\in A, \compl{l} \in I} \partpi{M \cup \set{l}}{\intsubsetlit{A}{l}{I}}{S}$.
	Let $C \in \partpi{M}{A}{S}$. 
	By hypothesis, $C$ is of the form $\compl{M}\vee \compl{Q}$, where $\emptyset\neq Q  \subseteq A$.
Since $I$ is \scompat{$(S\cup M)$}, $Q$ necessarily contains a literal $l\in A$ such that $\compl{l} \in I$. 
Assume that $l$ is the smallest literal in $Q$ satisfying this property.
We distinguish the following cases.
	\begin{description}
		\item Assume that $Q$ contains a literal $l'$ such that $\compl{l'}\in U$.
In this case, since $U \subseteq \ucons{M}{S}$, $S \cup M \modelst \compl{l'}$. Since $Q \subseteq A$, we also have $l' \in A$, and since $\partpi{M}{A}{S} \neq \set{\compl{M}}$,  we deduce that $\compl{M} \vee \compl{l'} \in \partpi{M}{A}{S}$. Since $\compl{M} \vee \compl{l'} \modelst C$ and $C \in \partpi{M}{A}{S}$, $C$ must be smaller or equal to $\compl{M} \vee \compl{l'}$, which is possible only if $C = \compl{M} \vee \compl{l'}$. 
We deduce that $C\in \uprop{U}{A}{S}$.
		\item  Otherwise,
we show that $Q \setminus \set{l} \subseteq \intsubsetlit{A}{l}{I}$. 
By Definition \ref{def:subsetlit}, we have $\subsetlit{A}{l} = \setof{l' \in A}{l < l'}$ and $\intsubsetlit{A}{l}{I} = \setof{l' \in A}{l' < l \wedge l' \not \in I} \cup \subsetlit{A}{l}$.
Let $l' \in Q$, with $l' \not = l$.
If $l' > l$ then $l' \in \subsetlit{A}{l} \subseteq \intsubsetlit{A}{l}{I}$. If $l' \not > l$, then since $>$ is total and $l \not = l'$, necessarily $l > l'$.
Since $l$ is the smallest literal in $Q$ such that $\compl{l} \in I$, we deduce that $\compl{l'} \not \in I$.  Thus $l' < l$ and $\compl{l'}\not \in I$, which entails that $l' \in \intsubsetlit{A}{l}{I}$.
		Consequently, $Q \setminus \set{l} \subseteq \intsubsetlit{A}{l}{I}$. Since $C = \compl{M \cup \set{l}} \vee \compl{Q \setminus \set{l}}$, this entails that $C \in \partpi{M\cup \set{l}}{\intsubsetlit{A}{l}{I}}{S}$.
	\end{description}	
		We now prove that $\uprop{U}{A}{S} \cup \bigcup_{l\in B, \compl{l} \in I} \partpi{M \cup \set{l}}{\intsubsetlit{B}{l}{I}}{S} \subseteq \partimp{M}{A}{S}$.

	\begin{description}
	\item{
	Let $C \in \uprop{U}{A}{S}$.
	By definition, $C$ is of the form $\compl{M} \cup l$ with $l \in \compl{A} \cap U$.
	Since $U \subseteq \ucons{M}{S}$, we deduce that $S \cup M \modelst l$, i.e., that $S \modelst \compl{M} \vee l$.
	Since $\compl{l} \in A$, this entails that $\compl{M} \vee l \in \partimp{M}{A}{S}$, hence $C \in \partimp{M}{A}{S}$.
	}
	\item{Let $C \in \partpi{M \cup \set{l}}{\intsubsetlit{A}{l}{I}}{S}$ with $l \in A$, $\compl{l} \in I$.
	By definition, $C = \compl{M} \vee \compl{l} \vee \compl{Q}$, with $Q \subseteq \intsubsetlit{A}{l}{I}$
	and $C \in \Aimpl{S}$. 
	But $\intsubsetlit{A}{l}{I} \subseteq A$ by definition, thus $Q \cup \set{l} \subseteq A$ and $C = M \vee (\compl{Q} \vee \compl{l}) \in \partimp{M}{A}{S}$.
	}
	\end{description}
\end{proof}

\newcommand{\closed}{$\subseteq$-closed}

Similarly to {\csp} (see \cite[Sect. 4.2]{EPT17}), we parameterize our algorithm by a predicate in order to filter the implicates that are generated.
The goal of this parametrization is to allow the user to restrict the form of the generated implicates.
Typically, one could want to generate implicates only up to a given size limit, or only those satisfying some specific semantic constraints.
\begin{definition}
	A predicate $\calP$ on sets of literals is \emph{\closed} if for all sets of literals $A$ such that $\calP(A)$ holds, if $B\subseteq A$ then $\calP(B)$ also holds.
\end{definition}

Examples of {\closed} predicates include cardinality constraints: $\calP_k \isdef \lambda A.\ \card{A} \leq k$, where $k\in \dN$, or implicant constraints: $\calP_\phi \isdef \lambda A.\ \phi \models A$, where $\phi$ is a formula. Note that {\closed} predicates can safely be combined by the conjunction and disjunction operators.

An important feature of {\closed} predicates is that implicates verifying such predicates can be generated on the fly without any post-processing step, thanks to the following result:
\begin{proposition}\label{prop:closed}
	If $\calP$ is {\closed} and $\calP(M)$ does not hold, then for all sets of literals $A$, $\calP(M\cup A)$ does not hold either.
\end{proposition}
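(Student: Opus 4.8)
The plan is to argue by contraposition, reading the proposition as a direct specialization of the defining property of \closed predicates. The key observation, which does all the work, is the purely set-theoretic fact that $M \subseteq M \cup A$ holds for \emph{every} set of literals $A$. Downward closure then lets me transfer satisfaction of $\calP$ from the larger set $M \cup A$ back to the smaller set $M$.

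Concretely, I would suppose toward a contradiction that $\calP(M \cup A)$ holds for some set of literals $A$. Instantiating the definition of \closed with the superset $M \cup A$ (for which $\calP$ is assumed to hold) and the subset $M \subseteq M \cup A$, I conclude that $\calP(M)$ holds. This contradicts the hypothesis that $\calP(M)$ does not hold, so $\calP(M \cup A)$ cannot hold for any $A$, which is exactly the claim.

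I expect no genuine obstacle: the statement is precisely the contrapositive form of the downward-closure condition applied to the canonical inclusion $M \subseteq M \cup A$, and the only thing that must be checked is that trivial inclusion. The point of the result is operational rather than mathematical — it licenses the search to be pruned as soon as $\calP(M)$ fails, since by the proposition no extension of $M$ can subsequently satisfy $\calP$, so there is no need to add further literals to $M$.
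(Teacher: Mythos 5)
Your proof is correct: the proposition is exactly the contrapositive of the $\subseteq$-closedness condition applied to the inclusion $M \subseteq M \cup A$, which is precisely why the paper states it without proof as an immediate consequence of the definition. Your argument matches the intended (omitted) justification.
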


The inclusion of these improvements to the original algorithm results in the one described in Algorithm \ref{alg:impid}.

\begin{algorithm}
	\caption{\impid($S, M, A, \calP$)}\label{alg:impid}
	\SetKw{Let}{let}
	\SetKw{And}{and}
	\SetKw{Or}{or}
\If{\UNSAT{M} \Or $\neg \calP(M)$}
	{
		\Return $\emptyset$\;	
	}
	\If{\UNSAT{S \cup M}}
	{		
		\Return $\set{\compl{M}}$\;
	}
	\Let{$U \subseteq \ucons{M}{S}$ such that $M\subseteq U$}  \label{line:propsimp}\;
	\Let{$S = \evaluate{S}{U}$}  \label{line:psetsimp}\;
	\Let{$A = \fixlit{A}{M}{S}$}  \label{line:linkedsimp}\;
	\Let{$I$ be an \scompat{$(S\cup M)$} set of literals} \label{line:modelsimp}\;
	\ForEach{$l\in A$ such that $\compl{l} \in I$}
	{		
		\Let{$P_l = \impid(S, M\cup \set{l}, \intsubsetlit{A}{l}{I}, \calP)$\;}		
	}
	\Return $\submin(\uprop{U}{A}{S}\cup \bigcup_{l\in A} P_l)$\;	
\end{algorithm}

\begin{lemma}
	If $\calP$ is {\closed} then
	$\impid(S,M,A,\calP) = \partpi{M}{A}{S} \cap \setof{\compl{A}}{A\in \calP}$.
\end{lemma}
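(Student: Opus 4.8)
The plan is to argue by induction on $\card{\A\setminus M}$, exactly as in the lemma for {\basalgo}, writing $\G_\calP\isdef\setof{\compl{A'}}{\calP(A')}$ for the set of clauses admissible under $\calP$ (so the claim reads $\impid(S,M,A,\calP)=\partpi{M}{A}{S}\cap\G_\calP$, with $A'$ a bound variable distinct from the parameter $A$). Two structural observations come first. Since $\calP$ is {\closed}, the family $\G_\calP$ is closed under taking subclauses: if $\compl{Q}\in\G_\calP$ and $Q'\subseteq Q$, then $\calP(Q')$ holds and $\compl{Q'}\in\G_\calP$. Moreover every clause of $\partimp{M}{A}{S}$ has the shape $\compl{M\cup Q}$ with $Q\subseteq A$, so its membership in $\G_\calP$ is exactly the condition $\calP(M\cup Q)$.

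For the base cases: if $M$ is \tunsat, both sides are empty by Proposition \ref{prop:unsatM}. If $\neg\calP(M)$, the algorithm returns $\emptyset$; and since every $C\in\partpi{M}{A}{S}$ is of the form $\compl{M\cup Q}$, Proposition \ref{prop:closed} gives $\neg\calP(M\cup Q)$, hence $C\notin\G_\calP$ and the right-hand side is empty as well. This is the one place where {\closed}ness is used in a base case. Finally, if $M$ is \tsat, $\calP(M)$ holds, and $S\cup M$ is \tunsat, the algorithm returns $\set{\compl{M}}$; Proposition \ref{prop:unsatM} gives $\partpi{M}{A}{S}=\set{\compl{M}}$, and $\compl{M}=\compl{M\cup\emptyset}\in\G_\calP$ because $\calP(M)$, so the two sides coincide.

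In the inductive step ($M$ \tsat, $\calP(M)$, and $S\cup M$ \tsat) I would first normalise the arguments. By Lemma \ref{lem:evalimp}, replacing $S$ by $\evaluate{S}{U}$ leaves $\partimp{M}{A}{S}$, hence $\partpi{M}{A}{S}$, unchanged; since the underlying $\theory$-equivalence of $S\cup M$ and $\evaluate{S}{U}\cup M$ is inherited by every extension $M\cup\set{l}$, this substitution is also harmless inside the recursive calls. By Proposition \ref{prop:partpi_incr}(\ref{it:unit}), replacing $A$ by $B=\fixlit{A}{M}{S}$ leaves $\partpi{M}{A}{S}$ unchanged. With $S$ and $A$ thus normalised, Lemma \ref{lm:ucm} decomposes $\partpi{M}{A}{S}$ as $\submin\!\big(\uprop{U}{B}{S}\cup\bigcup_l\partpi{M\cup\set{l}}{\intsubsetlit{B}{l}{I}}{S}\big)$, where $l$ ranges over the $l\in B$ with $\compl{l}\in I$, while the induction hypothesis identifies each recursive result as $P_l=\partpi{M\cup\set{l}}{\intsubsetlit{B}{l}{I}}{S}\cap\G_\calP$. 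It then remains to establish
\[
\submin\!\Big(\uprop{U}{B}{S}\cup\bigcup_l\big(\partpi{M\cup\set{l}}{\intsubsetlit{B}{l}{I}}{S}\cap\G_\calP\big)\Big)=\submin\!\Big(\uprop{U}{B}{S}\cup\bigcup_l\partpi{M\cup\set{l}}{\intsubsetlit{B}{l}{I}}{S}\Big)\cap\G_\calP,
\]
i.e. that the $\calP$-filter commutes with the $\submin$-reduction of the decomposition.

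I expect this commutation to be the main obstacle. The inclusion $\subseteq$ is the easy direction: every clause kept on the left lies in $\G_\calP$ and is a \timplicate, so it survives on the right. For $\supseteq$ two delicate points must be checked. First, when $\submin$ discards a clause $D\in\G_\calP$ one must be able to blame a subsuming clause that is itself in $\G_\calP$; here subclause-closedness of $\G_\calP$ is what makes this work, using that within the family $\set{\compl{M\cup Q}\mid Q\subseteq A}$ the subsuming representatives can be chosen as subclauses. Second, and most delicately, one must verify that the $\uprop{U}{B}{S}$ clauses surviving in the output respect $\calP$; since $\uprop{U}{B}{S}$ is not a priori contained in $\G_\calP$ (its clauses have the form $\compl{M\cup\set{l}}$ and $\calP(M\cup\set{l})$ need not follow from $\calP(M)$), this is the point at which the $\uprop$ term must be intersected with $\G_\calP$ for the identity to hold. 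Pinning down these two verifications against the precise definitions of $\submin$ and of the order $\sord$ is where the real work of the proof lies.
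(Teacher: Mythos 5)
Your strategy is the paper's own, almost clause for clause: the paper's proof consists precisely of dispatching the cases where $M$ or $S\cup M$ is \tunsat or $\neg\calP(M)$ holds via Propositions \ref{prop:unsatM} and \ref{prop:closed}, and then invoking induction on $\card{\A\setminus M}$ together with Lemma \ref{lem:evalimp}, Proposition \ref{prop:partpi_incr} and Lemma \ref{lm:ucm}. Your base cases and the normalisation of $S$ and $A$ in the inductive step are correct and in fact more detailed than what the paper writes down.

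The difference is that you stop at, and explicitly flag, the commutation of the $\calP$-filter with $\submin$ over the decomposition of Lemma \ref{lm:ucm} --- which is exactly the step the paper's proof silently absorbs into the word ``using''. Both of the delicate points you isolate are genuine and are not addressed by the paper. The one concerning $\uprop{U}{A}{S}$ is the more serious: as written, Algorithm \ref{alg:impid} returns $\submin(\uprop{U}{A}{S}\cup\bigcup_{l} P_l)$ without testing $\calP$ on the clauses $\compl{M}\vee\compl{l}$, so for $\calP=\calP_k$ with $\card{M}=k$ and some $l\in A$ with $\compl{l}\in U$ it can output a clause $\compl{M}\vee\compl{l}$ of size $k+1$ that the right-hand side $\partpi{M}{A}{S}\cap\setof{\compl{A}}{A\in\calP}$ excludes; the stated equality requires the $\uprop$ term to be filtered by $\calP$ as well (a harmless one-line fix, since $\calP(M\cup\set{l})$ can be tested at that point). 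Your first point is also a real obligation: a clause of the filtered family can be $\submin$-discarded only by a $\modelst$-subsumer, and $\modelst$-subsumption need not be witnessed by a subclause, so $\subseteq$-closedness does not transfer automatically and an explicit argument (or a weakening of the equality to $\eqsub$) is needed. In short: same route as the paper, correctly set up, but the one step you defer is the one step the paper also never writes, and for the $\uprop$ term the identity fails for the algorithm as literally stated.
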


\begin{proof}
	If one of $M$ or  $S\cup M$ is \tunsat, or $\calP(M)$ does not hold, then the result follows from Propositions \ref{prop:unsatM} and \ref{prop:closed}. Otherwise the result is proved by induction on $\card{\A \setminus M}$, using Proposition \ref{prop:partpi_incr} and Lemmata \ref{lm:ucm} and \ref{lem:evalimp}.
\end{proof}

\begin{theorem}
	If $\calP$ is {\closed} then $\PAimpl{S}\cap \setof{\compl{A}}{A\in \calP} = \impid(S, \emptyset, \A, \calP)$.
\end{theorem}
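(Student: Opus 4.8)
The plan is to obtain the statement as the instance $M = \emptyset$, $A = \A$ of the Lemma immediately preceding it. Substituting these values gives
\[
\impid(S, \emptyset, \A, \calP) = \partpi{\emptyset}{\A}{S} \cap \setof{\compl{A}}{A \in \calP},
\]
so everything reduces to identifying $\partpi{\emptyset}{\A}{S}$ with $\PAimpl{S}$.

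To carry this out I would first unfold Definition \ref{def:partimp}. Since $M = \emptyset$, the clause $\compl{M}$ is the empty clause $\false$, so $\partimp{\emptyset}{\A}{S} = \setof{C \in \Aimpl{S}}{\exists Q \subseteq \A,\ C = \compl{Q}}$. Every \timplicate $C$ of $S$ satisfies $\compl{C} \subseteq \A$ by definition, so taking $Q = \compl{C}$ places $C$ in $\partimp{\emptyset}{\A}{S}$; the converse inclusion is immediate since each member of $\partimp{\emptyset}{\A}{S}$ lies in $\Aimpl{S}$. Hence $\partimp{\emptyset}{\A}{S} = \Aimpl{S}$, and applying $\submin$ to both sides yields $\partpi{\emptyset}{\A}{S} = \submin(\Aimpl{S})$. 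The note recording $\PAimpl{S} \eqsub \submin(\Aimpl{S})$ then closes the identification and, together with the displayed equality, gives the theorem.

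The delicate point is the meaning of the equality sign. The identification $\partpi{\emptyset}{\A}{S} = \PAimpl{S}$ holds only modulo $\theory$-equivalence, because $\submin$ keeps a single $\sord$-minimal representative of each $\theory$-equivalence class whereas $\PAimpl{S}$ gathers all prime {\timplicate}s; thus the two sides of the theorem coincide up to $\eqsub$, consistent with the convention established by the note. The one genuinely non-routine thing to check is that the filter $\setof{\compl{A}}{A \in \calP}$ is compatible with this identification: a clause $C$ passes the filter exactly when $\compl{C}$ satisfies $\calP$, and this condition is not $\eqsub$-invariant in general. For the {\closed} predicates of interest this causes no loss, since $\sord$ refines inclusion, so the representative retained by $\submin$ is the inclusion-smallest clause of its class and hence the likeliest to meet a {\closed} constraint; I would make this precise by verifying that $\submin$ never trades a $\calP$-admissible prime implicate for a $\calP$-inadmissible one.
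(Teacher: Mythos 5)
Your proof is correct and matches the paper's (implicit) argument: the theorem is presented as an immediate corollary of the preceding lemma instantiated at $M=\emptyset$, $A=\A$, combined with the identification $\partpi{\emptyset}{\A}{S}=\submin(\Aimpl{S})\eqsub\PAimpl{S}$, which is exactly what you do. Your caveat that the stated equality really holds modulo $\eqsub$ (and that the $\calP$-filter must be checked against the choice of $\sord$-minimal representatives) is a fair reading of the paper's own convention --- the paper itself only records $\PAimpl{S}\eqsub\submin(\Aimpl{S})$ --- rather than a gap in your argument.
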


\section{On the storage of {\timplicate}s}

\label{sect:store}

\newcommand{\trie}{$\A$-tree\xspace}
\newcommand{\falsetrie}{\bot}
\newcommand{\clset}[1]{{\cal S}(#1)}

\newcommand{\atreen}{\{\apair{l_1}{\tau_1}, \dots, \apair{l_n}{\tau_n}\}}
\newcommand{\atreeempty}{\{ \apair{l}{\{\apair{l_1}{\emptyset}, \dots, \apair{l_n}{\emptyset}\}} \}}
\newcommand{\apair}[2]{#1\!:#2}

\newcommand{\delete}{\mathtt{Simp}}

\newcommand{\prune}[2]{\mathtt{rm}(#1,#2)}

\newcommand{\ordt}{<_t}

The number of implicates of a given formula may be huge, 
hence it is essential in practice to have appropriate data structures 
to store them in a compact way and efficient algorithms to check  
that a newly generated implicate $C$ is not redundant 
(forward subsumption modulo $\theory$), and if so, to
delete all the already generated implicates that are less general than $C$ (backward subsumption modulo $\theory$), 
before inserting 
$C$ into the stored implicates.
In this section, we devise a trie-like data-structure to perform these tasks.
As in the previous section, we only rely on the existence of a
decision procedure for testing $\theory$-satisfiability.

\begin{definition}
Let $\ordt$ be an order on the literals in $\A$, possibly, but not necessarily, equal to the order $<$ used for literal ordering in the implicate generation algorithm.
An {\em \trie} is inductively defined as $\falsetrie$ or a possibly empty set of pairs 
$\atreen$, where $l_1,\ldots,l_n$ are pairwise distinct literals in $\A$ and $\tau_i$ (for $i=1,\dots,n$) is an \trie only containing
literals that are strictly $\ordt$-greater than $l_i$.
An \trie is associated with a set of $\A$-clauses inductively defined as follows:
\[
\begin{tabular}{ccc}
$\clset{\falsetrie}$	& $\isdef$	& $\set{\false}$, \\
$\clset{\atreen}$			
& $\isdef$ 
& $\bigcup_{i=1}^n \{ l_i \vee C \mid C \in \clset{\tau_i}\}.$
\end{tabular}\]
\end{definition}
In particular, $\clset{\emptyset} = \emptyset$. Intuitively an \trie may be seen as a tree in which the edges are labeled by literals and the leaves are labeled by $\emptyset$ or $\falsetrie$, and represents a set of clauses corresponding to paths from the root to $\falsetrie$.
We introduce the following simplification rule (which may be applied at any depth inside a tree, not only at the root level):
\[
\begin{tabular}{cccc}
$\delete:$ & $\tau \cup \set{\apair{l}{\emptyset}}$ & $\rightarrow$ & $\tau$ \\
\end{tabular}
\]
Informally, the rule deletes all leaves labeled by $\emptyset$ except for the root. It may be applied recursively, for instance $\atreeempty \rightarrow_{\delete}^{n+1} \emptyset$. Termination is immediate since the size of the tree is strictly decreasing.
\begin{proposition}
If $\tau \rightarrow_{\delete} \tau'$ then  $\tau'$ is an  \trie and $\clset{\tau} = \clset{\tau'}$.
\end{proposition}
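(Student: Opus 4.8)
The plan is to proceed by structural induction on $\tau$, reflecting the fact that the rule $\delete$ may fire at any depth: the relation $\rightarrow_{\delete}$ is really the closure of the base rewrite $\sigma \cup \set{\apair{l}{\emptyset}} \rightarrow \sigma$ under the tree-building context. The single semantic observation that drives everything is that a leaf of the form $\apair{l}{\emptyset}$ contributes no clause. Indeed $\clset{\emptyset} = \emptyset$, so $\clset{\set{\apair{l}{\emptyset}}} = \setof{l \vee C}{C \in \clset{\emptyset}} = \emptyset$. Deleting such a pair therefore cannot change the associated clause set, and it remains only to propagate this through the recursive definition of $\clset{\cdot}$ and to check that well-formedness of {\trie}s is preserved.

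For the base case, the redex is at the root, so $\tau = \sigma \cup \set{\apair{l}{\emptyset}}$ and $\tau' = \sigma$, where by the {\trie} condition $l$ is distinct from the head literals of $\sigma$. Then $\tau'$ is a {\trie}, being a subset of the pairs of the {\trie} $\tau$ (pairwise distinct head literals and well-formed sub-{\trie}s are inherited). Since $\clset{\cdot}$ distributes over a union of pairs with distinct head literals, we get $\clset{\tau} = \clset{\sigma} \cup \clset{\set{\apair{l}{\emptyset}}} = \clset{\sigma} \cup \emptyset = \clset{\tau'}$, as required.

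For the inductive step, the redex lies inside one subtree $\tau_i$ of $\tau = \atreen$, so $\tau_i \rightarrow_{\delete} \tau_i'$ and $\tau' = \set{\apair{l_1}{\tau_1}, \dots, \apair{l_i}{\tau_i'}, \dots, \apair{l_n}{\tau_n}}$. By the induction hypothesis, $\tau_i'$ is a {\trie} with $\clset{\tau_i} = \clset{\tau_i'}$. The key point for well-formedness is that $\delete$ only \emph{removes} pairs and never introduces new literals, so the literals occurring in $\tau_i'$ form a subset of those in $\tau_i$ and are thus still strictly $\ordt$-greater than $l_i$; hence $\tau'$ is a well-formed {\trie}. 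Finally, $\clset{\tau'}$ is obtained from $\clset{\tau}$ by replacing the $i$-th summand $\setof{l_i \vee C}{C \in \clset{\tau_i}}$ by the equal set $\setof{l_i \vee C}{C \in \clset{\tau_i'}}$, so the two unions coincide and $\clset{\tau} = \clset{\tau'}$.

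The computation is routine; the only point requiring care is the well-formedness bookkeeping in the inductive step, namely verifying that the ordering constraint imposed on sub-{\trie}s survives the rewrite. This is exactly where I would stress that $\delete$ deletes pairs without ever adding literals, so the set of literals can only shrink and the strict-$\ordt$-greater condition on each $\tau_i$ is automatically maintained.
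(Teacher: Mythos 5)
Your proof is correct. The paper states this proposition without proof, treating it as immediate; your structural induction --- with the key observation that a pair $\apair{l}{\emptyset}$ contributes no clause because $\clset{\emptyset}=\emptyset$, together with the check that deleting pairs only shrinks the set of literals and hence preserves the $\ordt$-ordering constraints --- is exactly the routine argument the authors leave implicit.
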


The algorithm permitting the insertion of a clause in an \trie is straightforward and thus omitted.
The following lemma provides a simple algorithm to check whether a clause is a logical consequence modulo $\theory$ of some clause in $\clset{\tau}$ (forward subsumption). The algorithm proceeds by induction on the \trie.
\begin{lemma}
\label{lem:forward}
Let $C$ be a clause and let $\tau$ be an \trie. 
We have $\clset{\tau} \gsub C$ iff one of the following conditions hold:
\begin{itemize}
\item{$\tau = \falsetrie$.}
\item{$\tau = \atreen$
and there exists $i \in [1,n]$
such that $l_i \modelst C$
and $\clset{\tau_i} \gsub C$.}
\end{itemize}
\end{lemma}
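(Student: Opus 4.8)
The plan is to argue by a direct case analysis on the inductive structure of $\tau$, unfolding the definition of $\clset{\tau}$ and appealing to Proposition \ref{prop:model}(\ref{model:decomp}). No genuine induction hypothesis is needed for the equivalence itself, since the right-hand side is already phrased in terms of the relation $\clset{\tau_i} \gsub C$ on the immediate subtrees; the recursive structure is used only implicitly, to justify that the resulting forward-subsumption test terminates on the well-founded tree.

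First I would treat the base case $\tau = \falsetrie$. Here $\clset{\tau} = \set{\false}$, and since the empty clause $\false$ is satisfied by no interpretation, $\false \modelst C$ holds vacuously for every clause $C$. Hence $\clset{\tau} \gsub C$ holds, which is precisely the first disjunct of the stated condition, so both sides are true.

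Next I would handle the case $\tau = \atreen$, which also subsumes $n = 0$ (i.e.\ $\tau = \emptyset$). By definition, $D \in \clset{\tau}$ iff $D = l_i \vee D'$ for some $i \in [1,n]$ and some $D' \in \clset{\tau_i}$ (this is a well-formed clause, as $\tau_i$ contains only literals strictly $\ordt$-greater than $l_i$, so $l_i \notin D'$). Thus $\clset{\tau} \gsub C$ holds iff there exist such $i$ and $D'$ with $l_i \vee D' \modelst C$. By Proposition \ref{prop:model}(\ref{model:decomp}), $l_i \vee D' \modelst C$ is equivalent to the conjunction $l_i \modelst C$ and $D' \modelst C$. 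Since the condition $l_i \modelst C$ does not depend on the choice of $D'$, the existential over $D'$ can be pulled inside, giving the equivalent statement: there exists $i \in [1,n]$ such that $l_i \modelst C$ and there exists $D' \in \clset{\tau_i}$ with $D' \modelst C$, i.e.\ $l_i \modelst C$ and $\clset{\tau_i} \gsub C$. This is exactly the second disjunct. When $n = 0$, both sides are false, since $\clset{\emptyset} = \emptyset$ makes the left-hand side fail and the existential over $i$ on the right-hand side is vacuous.

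The only point worth stating explicitly is that the empty clause entails every clause, so the $\falsetrie$ leaves act as universal subsumers; everything else is a mechanical rearrangement of quantifiers combined with the decomposition of a disjunction provided by Proposition \ref{prop:model}(\ref{model:decomp}). I do not expect any real obstacle here.
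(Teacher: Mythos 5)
Your proof is correct and follows essentially the same route as the paper's: handle the $\falsetrie$ leaf via vacuous entailment by the empty clause, then unfold $\clset{\atreen}$ and apply Proposition \ref{prop:model}(\ref{model:decomp}) to split $l_i \vee D' \modelst C$ into $l_i \modelst C$ and $D' \modelst C$ before regrouping the existentials. The explicit treatment of the $n=0$ case is a small addition the paper leaves implicit, but nothing of substance differs.
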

\begin{proof}
If $\tau = \falsetrie$ then $\clset{\tau} = \{ \false \} \gsub C$ hence the equivalence holds.
Otherwise, let $\tau = \atreen$. By definition, $\clset{\tau} \gsub C$ holds iff there exists a clause $D \in \clset{\tau}$ such that 
$D \modelst C$. Since $\clset{\atreen} = \bigcup_{i=1}^n \{ l_i \vee E \mid E \in \clset{\tau_i}\}$, the previously property holds iff 
there exists $i \in [1,n]$ and $E \in \clset{\tau_i}$ such that $l_i \vee E \modelst C$, i.e., such that
$l_i \modelst C$ and $E \modelst C$ by Proposition \ref{prop:model}(\ref{model:decomp}). 
By definition, $\exists E\, (E \modelst C \wedge E \in \clset{\tau_i})$ iff $(\clset{\tau_i} \gsub C)$. Furthermore, $l_i \gsub C$ holds iff
$\compl{C} \cup \set{l_i}$ is \tunsat, hence the result.
\end{proof}
The following definition provides an algorithm to remove, in a given \trie, all branches corresponding to clauses that are logical consequences of a given formula modulo $\theory$ (backward subsumption).
\begin{definition}
\label{def:backward}
Let $\formA$ be a formula and let $\tau$ be an \trie. 
$\prune{\tau}{\formA}$ denotes the \trie defined as follows:
\begin{itemize}
\item{If $\formA$ is \tunsat, then $\prune{\tau}{\formA} \isdef \emptyset$.}
\item{If $\formA$ is \tsat, then:
\begin{itemize}
\item{$\prune{\falsetrie}{\formA} \isdef \falsetrie$, }
\item{$\prune{\atreen}{\formA} \isdef 
\bigcup_{i=1}^n \{ \apair{l_i}{\prune{\tau_i}}{\formA \wedge \compl{l_i}} \}$.}
\end{itemize}}
\end{itemize}
\end{definition}
Intuitively, starting with some clause $C$, the algorithm incrementally adds literals $\compl{l_1},\dots,\compl{l_n}$ occurring in the clauses $D = l_1 \vee \dots \vee l_n \in \clset{\tau}$ and invokes the SMT solver after each addition. If a contradiction is found then this means that $C \modelst D$, hence the branch corresponding to $D$ can be removed. The calls are shared among all common prefixes.
Of course, this algorithm is interesting mainly if the SMT solver is able to perform incremental satisfiability testing, with ``push'' and ``pop'' commands to add and remove formulas from the set of axioms (which is usually the case). 
\begin{lemma}
Let $\formA$ be a formula and let $\tau$ be an \trie.
Then $\prune{\tau}{\formA}$ is an \trie, and $\clset{\prune{\tau}{\formA}} = \{ C \in \clset{\tau} \mid \formA \not \modelst C \}$.
\end{lemma}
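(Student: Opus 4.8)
The plan is to prove both assertions simultaneously by structural induction on the \trie $\tau$, keeping the formula $\formA$ as a universally quantified parameter of the induction. This is essential because the recursive calls in Definition \ref{def:backward} descend into the strictly smaller subtrees $\tau_i$ but replace $\formA$ by $\formA \wedge \compl{l_i}$, so the induction hypothesis must be available for every formula, not just for $\formA$ itself.

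First I would dispose of the case where $\formA$ is \tunsat, which is independent of the shape of $\tau$: here $\prune{\tau}{\formA} = \emptyset$ by definition, this is trivially an \trie, and $\clset{\emptyset} = \emptyset$; since an \tunsat formula $\theory$-entails every clause, the set $\set{C \in \clset{\tau} \mid \formA \not\modelst C}$ is empty as well, so both sides agree. Assuming now that $\formA$ is \tsat, the base case $\tau = \falsetrie$ is immediate: $\prune{\falsetrie}{\formA} = \falsetrie$ is an \trie, $\clset{\falsetrie} = \set{\false}$, and since a $\theory$-model of $\formA$ falsifies the empty clause we have $\formA \not\modelst \false$, so the filtered set is again $\set{\false}$.

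The inductive step, with $\tau = \atreen$ and $\formA$ \tsat, is the heart of the argument, where $\prune{\tau}{\formA} = \bigcup_{i=1}^n \set{\apair{l_i}{\prune{\tau_i}{\formA \wedge \compl{l_i}}}}$. To check that this is an \trie I would first record, by a side induction on the definition of $\prune$, that $\prune$ never introduces a literal absent from its tree argument; consequently each $\prune{\tau_i}{\formA \wedge \compl{l_i}}$ only mentions literals strictly $\ordt$-greater than $l_i$, the $l_i$ remain pairwise distinct, and well-formedness is inherited. For the clause set I would unfold $\clset{\cdot}$ into $\bigcup_{i=1}^n \set{l_i \vee C \mid C \in \clset{\prune{\tau_i}{\formA \wedge \compl{l_i}}}}$, apply the induction hypothesis to each $\tau_i$ to rewrite the inner set as $\set{C \in \clset{\tau_i} \mid \formA \wedge \compl{l_i} \not\modelst C}$, and then invoke Proposition \ref{prop:model}(\ref{model:shift}) to turn $\formA \wedge \compl{l_i} \modelst C$ into $\formA \modelst l_i \vee C$. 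The surviving condition $\formA \not\modelst l_i \vee C$ depends only on the emitted clause $l_i \vee C$, so it can be pulled outside the union; using $\clset{\atreen} = \bigcup_{i=1}^n \set{l_i \vee C \mid C \in \clset{\tau_i}}$ this yields precisely $\set{D \in \clset{\tau} \mid \formA \not\modelst D}$.

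I expect the only real difficulties to be bookkeeping rather than conceptual: one must phrase the induction over all formulas so that the recursive instance with $\formA \wedge \compl{l_i}$ --- which may itself have become \tunsat, thereby landing in the first case above --- is legitimately covered, and one must separately establish the ``no new literals'' invariant to conclude that $\prune{\tau}{\formA}$ is an \trie. The single semantically load-bearing step is the application of Proposition \ref{prop:model}(\ref{model:shift}), which is exactly what converts the locally refined entailment stored in each subtree into the global entailment that the backward-subsumption specification demands.
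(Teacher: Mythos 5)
Your proof is correct and follows essentially the same route as the paper's: an induction on $\tau$ (with the formula parameter varying), the same three-way case split on whether $\formA$ is \tunsat and whether $\tau = \falsetrie$, the same auxiliary invariant that $\prune{\tau}{\formA}$ introduces no new literals, and the same pivotal use of Proposition~\ref{prop:model}(\ref{model:shift}) to convert $\formA \wedge \compl{l_i} \not\modelst D$ into $\formA \not\modelst l_i \vee D$. Your explicit remark that the induction hypothesis must be quantified over all formulas is a point the paper leaves implicit, but the argument is the same.
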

\optionalproof{
The proof is by induction on $\tau$. We also prove that all literals in $\prune{\tau}{\formA}$ also occur in $\tau$. 
We distinguish several cases.
\begin{itemize}
\item{If $\formA$ is \tunsat, then $\formA \modelst C$, for every $C \in \clset{\tau}$. 
Consequently, $\{ C \in \clset{\tau} \mid \formA \not \modelst C \} = \emptyset$.
Furthermore, $\prune{\tau}{\formA} = \emptyset$, thus $\prune{\tau}{\formA}$ is an \trie and $\clset{\prune{\tau}{\formA}} = \emptyset$. Therefore, the result holds. 
Note that $\prune{\tau}{\formA}$ contains no literal at all.}
\item{If $\formA$ is \tsat and $\tau = \falsetrie$, then $\clset{\tau} = \{ \false \}$ and
$\formA \not \gsub \false$, thus $\{ C \in \clset{\tau} \mid \formA \not \modelst C \} = \{ \false \}$.
Furthermore, $\prune{\tau}{\formA} = \falsetrie$, thus $\prune{\tau}{\formA}$ is an \trie and the equality holds.
As in the previous case, $\prune{\tau}{\formA}$ contains no literal.
}
\item{Now assume $\formA$ is \tsat and $\tau = \atreen$, where $l_1,\dots,l_n$ are pairwise distinct, and all literals in $\tau_i$ are strictly greater than $l_i$. Then
$\clset{\atreen} = \bigcup_{i=1}^n \{ l_i \vee D \mid D \in \clset{\tau_i}\}$, hence
\[\{ C \in \clset{\tau} \mid \formA \not \modelst C \}  
= \bigcup_{i=1}^n \{ l_i \vee  D \mid D \in \clset{\tau_i} \text{ and } \formA \not \modelst l_i \vee D \}.\]
By Proposition \ref{prop:model}(\ref{model:shift}), we deduce that
\[\{ C \in \clset{\tau} \mid \formA \not \modelst C \}  
= \bigcup_{i=1}^n \{ l_i \vee  D \mid  D \in \clset{\tau_i} \text{ and } \formA \wedge \compl{l_i} \not \modelst D \}.\] By the induction hypothesis, 
$\prune{\tau_i}{\formA \wedge \compl{l_i}}$ is an \trie and 
$\clset{\prune{\tau_i}{\formA \wedge \compl{l_i}}} = \{ D  \in \clset{\tau_i} \mid  \formA \wedge \compl{l_i} \not \models D \}$, thus:
\[\{ C \in \clset{\tau} \mid \formA \not \modelst C \}  
= \bigcup_{i=1}^n \{ l_i \vee D \mid D \in \clset{\prune{\tau_i}{\formA \wedge \compl{l_i}}} \}.\]
By definition, $\prune{\tau}{\formA} = 
\bigcup_{i=1}^n \{ \apair{l_i}{\prune{\tau_i}}{\formA \wedge \compl{l}} \}$.  The $l_1,\dots,l_n$ are pairwise distinct and all literals in $\clset{\tau_i}$ also occur in $\tau_i$, thus are greater than $l_i$. Hence this entails that 
$\prune{\tau}{\formA}$ is an \trie, and
\[\clset{\prune{\tau}{\formA}} = \bigcup_{i=1}^n \{ l_i \wedge D \mid D \in \clset{\prune{\tau_i}{\formA \wedge \compl{l}}} \} = \{ C \in \clset{\tau} \mid \formA \not \modelst C \}.\]
By definition, the literals in $\prune{\tau}{\formA}$ occur either in $l_1,\dots,l_n$ or $\prune{\tau_i}{\formA \wedge \compl{l}}$ (hence $\tau_i$), thus must occur in $\tau$. 
}
\end{itemize}
}
\begin{remark}
The {\trie}s may be represented as dags instead of trees.
In this case, it is clear that the complexity, defined as the number of satisfiability tests of forward subsumption (as defined in Lemma \ref{lem:forward}) is of the same order as the size of the dag, since the recursive calls only depend on the considered subtree.
For backward subsumption (see Definition \ref{def:backward}) the situation is different since the recursive calls have an additional parameter that is the formula $\phi$, which depends on the path in the \trie. The maximal number of satisfiability tests is therefore equal to the size of underlying tree, and not that of the dag. Note that it would be necessary to make copies of some of the subtrees, if two pruning operations are applied on the same (shared) subtree with different formulas. 
\end{remark}

\section{Experimental evaluation}

\label{sect:exp}

Algorithm \ref{alg:impid} has been implemented in a C++ framework called \gpid.
The SMT solver
is used as a black box and \gpid can thus be plugged with any tool serving this purpose, provided an interface is written for it.
As a consequence, the handled theory is
only restricted by the SMT solver.
Three interfaces were implemented, respectively for \minisat \cite{DBLP:conf/sat/EenS03}, \cvcfsmt \cite{CVC4} and \ztsmt \cite{DBLP:conf/tacas/MouraB08}.
The implicate generator used in the reported experiments
is the one based on  \ztsmt, which turned out to be  more efficient  on the considered benchmarks.
All the tests were run on one core of an Intel(R) Core(TM) i5-4250U machine running at 1,9 GHz with 1 GiB of RAM.
The benchmarks are extracted from the SMTLib \cite{BarST-SMTLIB} library, 
the considered theories are quantifier-free uninterpreted functions (QF\_UF) and quantifier-free linear integer arithmetic with uninterpreted functions (QF\_UFLIA).
For obvious reasons, only satisfiable examples have been kept for analysis.
Abducible literals are
part of the problem input, they
are generated by considering all ground equalities and disequalities with a maximal depth provided by the user;
all the experiments were conducted using a maximal depth of $1$ and the average number of abducible literals is around $13397$ (min. $1741$, max. 17.$10^6$).
We chose not to apply unit propagation simplifications to the considered sets of clauses.
More precisely,
this means that we let $U=M$ at line \ref{line:propsimp} of Algorithm \ref{alg:impid} and delegate the
simplifications that could occur in the following line to the satisfiability checker.
The reason for this decision is that efficiently performing such simplifications can be difficult and strongly depends on
the theory.
We also define $\fixlit{}{}{}$ as the complementation on literals and $\calP$ as either $\true$ or a predicate ensuring $\card{M} \leq n$ to generate {\timplicate}s of size at most  $n$.
In all the experiments, the prime implicates filter ($\submin$) was not active, so that implicates can be generated on the fly.
Finally, if available, we recover models of $S \cup M$ from the SMT solver in order to further prune the set of abducibles (see Line \ref{line:modelsimp} of Algorithm \ref{alg:impid}).
\begin{table}[t]
    \caption{Number of problems for which at least one \timplicate of a given maximal size  can be generated in a given amount of time (in seconds), for the QF\_UF SMTLib benchmark (2549 examples). \label{fig:gpid:uf:first}}
    	\begin{center}
    		\setlength{\tabcolsep}{4pt}
    	\begin{tabular}{|c|c|c|c|c|c|c|c|c|}
            \hline
            \backslashbox{Size}{Time}&$[0,0.5[$&$[0.5,1[$&$[1,1.5[$&$[1.5,2[$&$[2,5[$&$[5,10[$&$[10,35[$&None\\\hline
            $1$&$2235$&$75$&$28$&$16$&$33$&$32$&$61$&$69$\\\hline
            $2$&$2236$&$81$&$27$&$16$&$30$&$23$&$67$&$69$\\\hline
            $3$&$2236$&$79$&$27$&$16$&$34$&$23$&$65$&$69$\\\hline
            $4$&$2230$&$84$&$23$&$18$&$33$&$24$&$68$&$69$\\\hline
            $5$&$2231$&$79$&$27$&$12$&$36$&$22$&$73$&$69$\\\hline
            $6$&$2234$&$73$&$29$&$15$&$30$&$24$&$75$&$69$\\\hline
            $7$&$2231$&$81$&$23$&$15$&$33$&$22$&$75$&$69$\\\hline
            $8$&$2233$&$78$&$23$&$16$&$33$&$21$&$76$&$69$\\\hline
            
        \end{tabular}
    \end{center}
\end{table}
\begin{table}[t]
    \caption{Number of problems for which at least one \timplicate of a given maximal size  can be generated in a given amount of time (in seconds),  for the QF\_UFLIA SMTLib benchmark (400 examples). \label{fig:gpid:uflia:first}}
    \begin{center}
    	\setlength{\tabcolsep}{4pt}
        \begin{tabular}{|c|c|c|c|c|c|c|c|c|}
            \hline
            \backslashbox{Size}{Time}&$[0,0.5[$&$[0.5,1[$&$[1,1.5[$&$[1.5,2[$&$[2,5[$&$[5,10[$&$[10,35[$&None\\\hline
            $1$&$120$&$23$&$46$&$76$&$100$&$6$&$25$&$4$\\\hline
            $2$&$120$&$23$&$6$&$0$&$0$&$0$&$247$&$4$\\\hline
            $3$&$120$&$23$&$6$&$0$&$96$&$4$&$147$&$4$\\\hline
            $4$&$120$&$23$&$6$&$0$&$0$&$0$&$247$&$4$\\\hline
            $5$&$120$&$23$&$6$&$0$&$0$&$0$&$247$&$4$\\\hline
            $6$&$120$&$22$&$7$&$0$&$0$&$0$&$247$&$4$\\\hline
            $7$&$121$&$22$&$6$&$0$&$0$&$0$&$247$&$4$\\\hline
            $8$&$116$&$24$&$6$&$3$&$0$&$0$&$247$&$4$\\\hline
            
        \end{tabular}
    \end{center}
\end{table}
 Tables 1 and 2 show the number of examples for which our tool generates at least one \timplicate for a given timespan, for the QF\_UF and QF\_UFLIA benchmarks respectively. The results show that our tool is quite efficient, since it fails to generate any \timplicate within $35$ seconds for only 2\% (resp. 1\%) of the QF\_UF (resp. QF\_UFLIA) benchmarks.
Figure
\ref{fig:gpid:uflia:exists} shows the proportion of the
QF\_UFLIA set for which \gpid generates an implicate in less than $15$ seconds, depending on the maximal size constraint.
For the QF\_UF benchmark, the proportion decreases from $97\%$ for a maximal size constraint of $1$ to $95\%$ when there are no size restrictions.
We also point out that for $57\%$ of the QF\_UF benchmark, we are actually able to generate all the {\timplicate}s of size $1$ in less than $15$ seconds.
\begin{wrapfigure}{r}{0.5\textwidth}
    \caption{Proportion (out of 100) of examples of the QF\_UFLIA benchmark where \gpid generates at least one implicate under $15$ seconds. \label{fig:gpid:uflia:exists}}
    \begin{center}
        \includegraphics[scale=0.35]{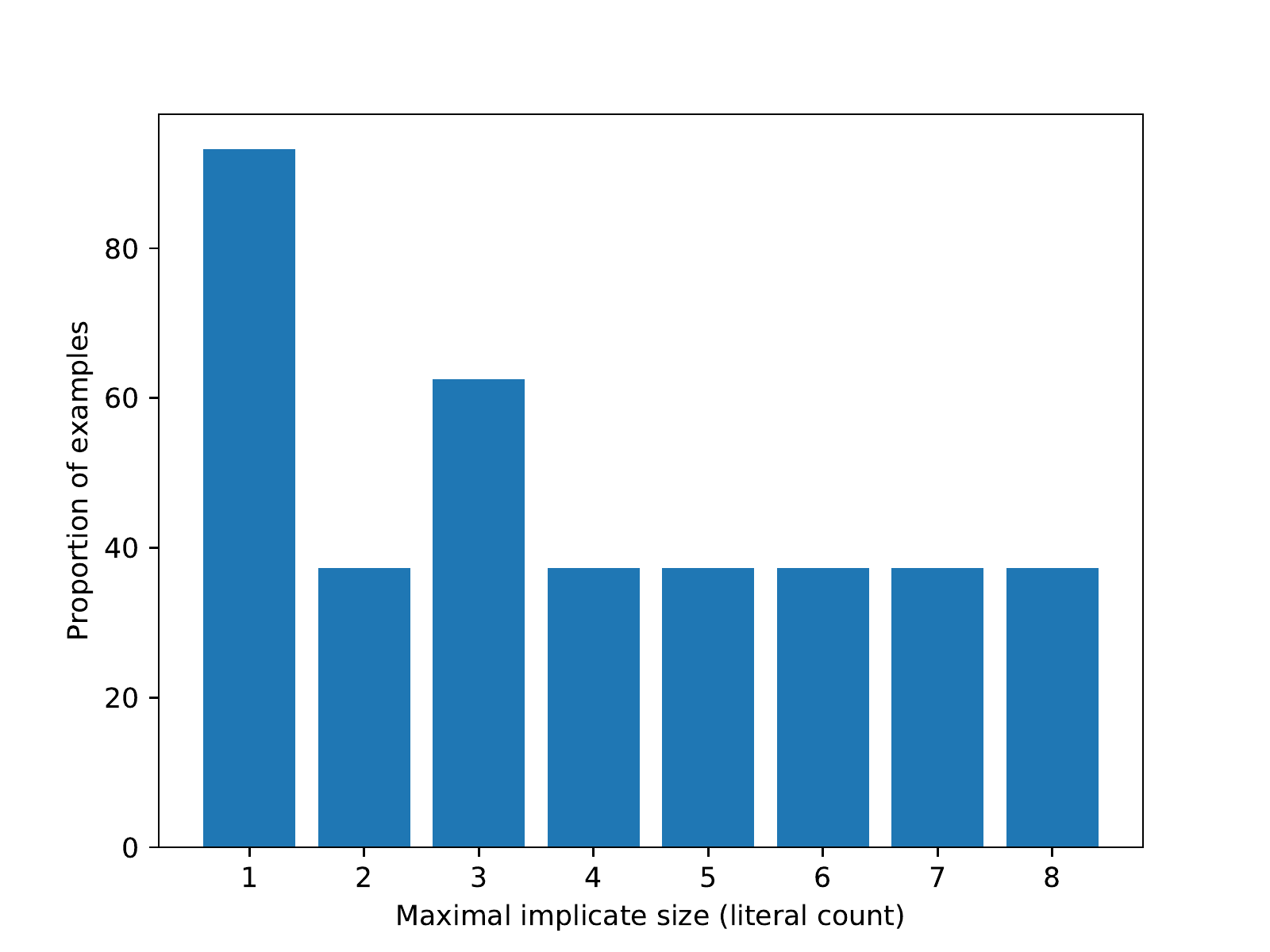}
    \end{center}
\end{wrapfigure}

We ran additional experiments to compare this approach with a previous one based on a superposition-based approach \cite{EPT15,EPT17} and implemented in the {\csp} tool.
As far as we are aware, \csp is the only other available tool for implicate generation in the theory of equality with uninterpreted function symbols.
Previous experiments (see, e.g., \cite{EPT15,EPT17}) showed that \csp is  already more efficient than 
approaches based on a reduction to propositional logic for generating implicates of ground  equational formulas, which is why we did not run comparisons against tools for propositional implicate generation.
\csp is based on a constrained calculus defined by the usual inference rules of the superposition calculus
together with additional rules to dynamically assert new abducible hypotheses on demand into the search space. 
The asserted hypotheses are attached to the clauses as constraints and, when an empty clause is generated, 
the negation of these hypotheses yields a \timplicate.
We chose to compare the tools by focusing on their ability to generate at least one  \timplicate of a given size.
Indeed, generating
all (prime) {\timplicate}s is unfeasible within a reasonable amount of time except for very simple formulas, and comparing the raw number of {\timplicate}s generated
is not relevant because some of these may actually be redundant w.r.t.\ non-generated ones\footnote{A refined comparison 
of the set of generated  {\timplicate}s modulo theory entailment is left for future work.}.
 We believe in practice, being able to efficiently compute a small number of {\timplicate}s for a complex problem is more useful than 
computing huge sets of {\timplicate}s but only for simple formulas.
The following experiments are only based on benchmarks that can be solved by both prototypes, as \csp is not capable of handling integer arithmetics.
\begin{figure}[t]
    \caption{Number of examples from the QF\_UF benchmark set for which \gpid(on the left, darker color) and \csp (on the right, lighter color) generate at least one \timplicate within a given time (a)  and generate at least one implicate of a given maximal size under $15$ seconds (b)\label{fig:compare:uf:al1}}
    \vspace{-1.5em}
    \begin{center}
        \begin{tabular}{cc}
            (a) & (b) \\
            \vspace{-1em}
            \includegraphics[scale=0.35]{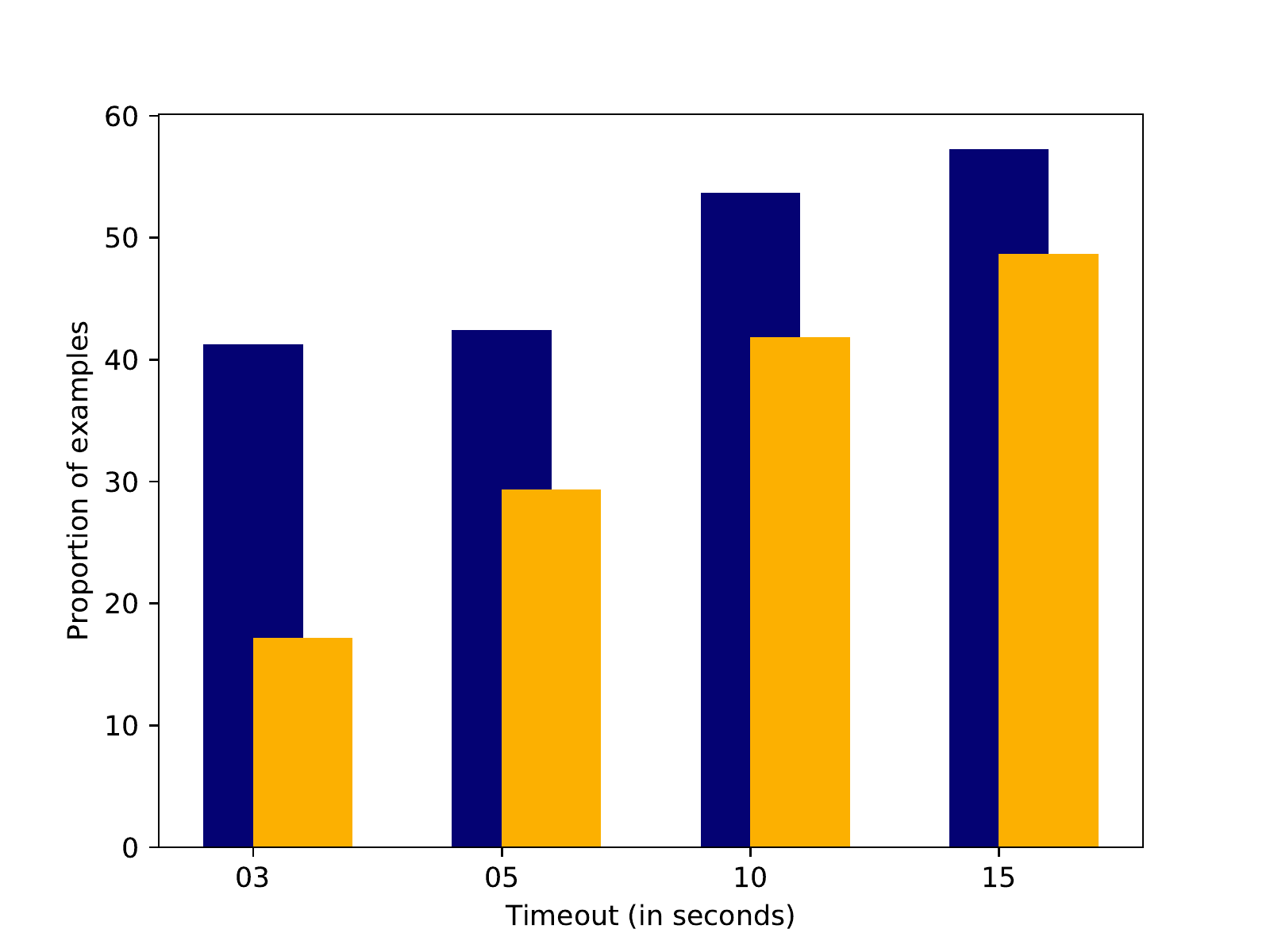}
            &
            \vspace{-1em}
            \includegraphics[scale=0.35]{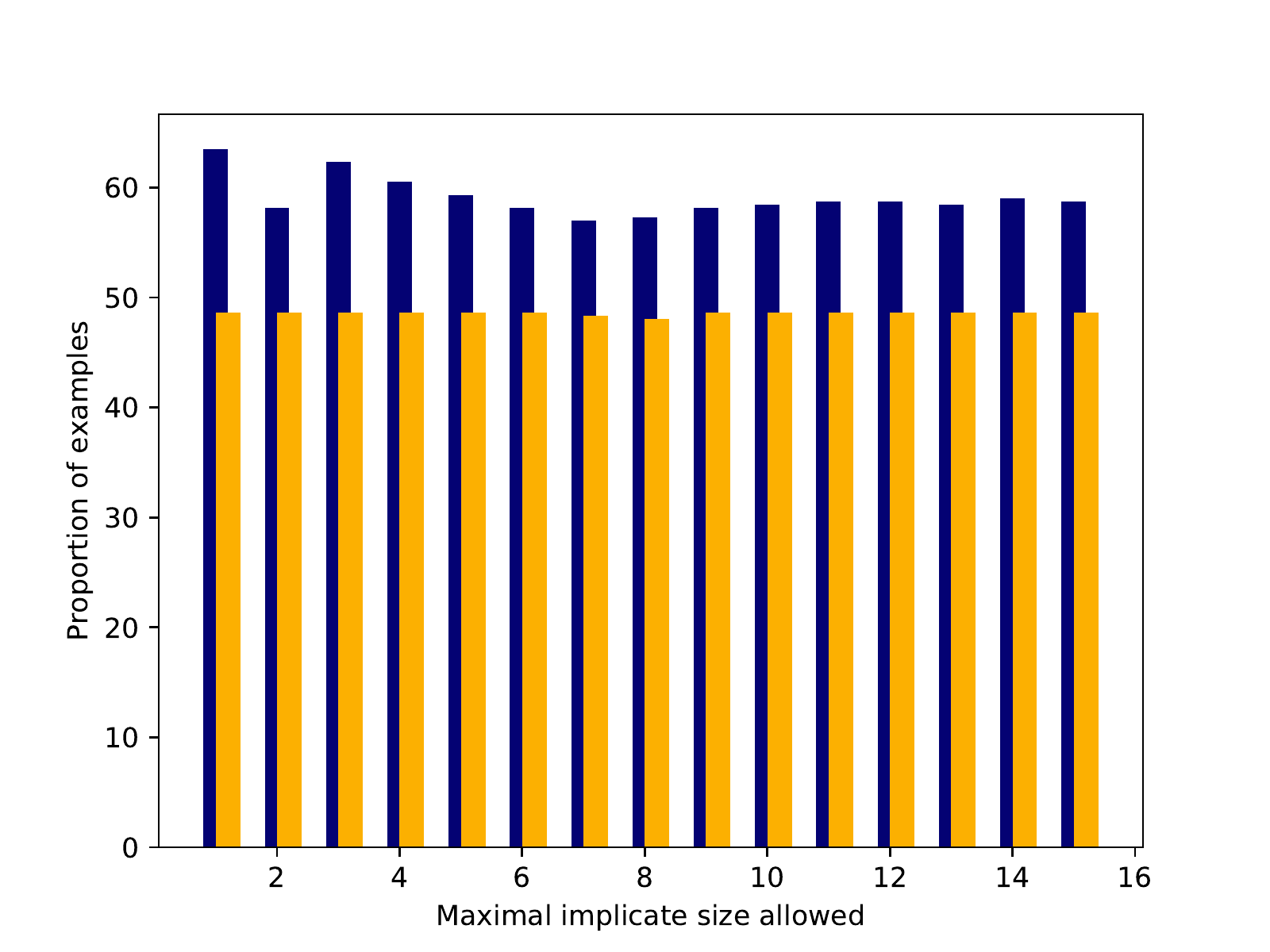}
        \end{tabular}
    \end{center}
\end{figure}
We represented on Figure \ref{fig:compare:uf:al1} the number of examples for which both tools can generate at least one \timplicate with a given maximal size constraint for various timeouts (a) and generate at least one \timplicate within a given time limit for various maximal size constraints (b).

\section{Conclusion}

\label{sect:conc}

We devised a generic algorithm to generate implicates modulo theories and showed that the corresponding implementation is
more efficient than a previous attempt based on superposition. 
This result was to be expected since the DPLL($\theory$) approach is more efficient than engines based on the Superposition Calculus for testing the satisfiability of quantifier-free formulas with a large combinatorial structure. Furthermore,
the used superposition engine had to be specifically tuned for implicate generation, and it is far less efficient than state-of-the-art 
systems such as Vampire \cite{RV01}, E \cite{Schulz:LPAR-2013} or Spass \cite{SPASS} (this is of course the advantage of having a generic algorithm using decision procedures as black boxes).
While our aim was to be completely generic, it is clear than the efficiency of the procedure could be improved 
in practice by integrating theory-specific algorithms for deriving consequences 
and normalizing formulas. For instance, in the case of the theory of equality with uninterpreted function symbols, the implicates could be normalized by replacing each term
by its minimal representative, as  is done in \cite{EPT17}.
Efficient, theory-dependent simplification procedures will also  be explored in future work.
A combination between the 
superposition-based approach \cite{EPT17}, in which the assertion of hypotheses is guided by the proof procedure could also be beneficial.
Our approach could also be combined with that of \cite{dillig2012minimum}, which is based on model building and quantifier-elimination.

\bibliography{biblio,Nicolas.Peltier}
\bibliographystyle{abbrv}
\end{document}